\newcommand{\bm}[1]{\mbox{\pmb{$#1$}}}
\newcommand{\dref}[1]{(\ref{#1})}
\newcommand{\Tr}{\mbox{Tr}}
\newcommand{\Perm}{\mbox{Perm}}
\newtheorem {Lemma}{Lemma}
\newtheorem {Corollary}{Corollary}
\begin{document}

\title{Ergodic Sum Capacity of Macrodiversity MIMO Systems in Flat Rayleigh Fading}

\author{Dushyantha A. Basnayaka,~\IEEEmembership{Student Member,~IEEE,}
        Peter J. Smith,~\IEEEmembership{Senior Member,~IEEE}
        and~Phillipa A. Martin,~\IEEEmembership{Senior Member,~IEEE}
\thanks{D. A. Basnayaka, P. J. Smith and P. A. Martin are with the Department of Electrical and Computer Engineering, University of Canterbury, Christchurch, New
Zealand. E-mail:\{dush, p.smith, p.martin\}@elec.canterbury.ac.nz.}
\thanks{D. A. Basnayaka is supported by a University of Canterbury International Doctoral Scholarship.}}

\maketitle
\begin{abstract}
The prospect of base station (BS) cooperation leading to joint
combining at widely separated antennas has led to increased interest
in macrodiversity systems, where both sources and receive antennas
are geographically distributed. In this scenario, little is known
analytically about channel capacity since the channel matrices have
a very general form where each path may have a different power.
Hence, in this paper we consider the ergodic sum capacity of a
macrodiversity MIMO system with arbitrary numbers of sources and
receive antennas operating over Rayleigh fading channels. For this
system, we compute the exact ergodic capacity for a two-source
system and a compact approximation for the general system, which is
shown to be very accurate over a wide range of cases. Finally, we
develop a highly simplified upper-bound which leads to insights into
the relationship between capacity and the channel powers. Results
are verified by Monte Carlo simulations and the impact on capacity
of various channel power profiles is investigated.
\end{abstract}

\begin{IEEEkeywords}
Macrodiversity, MIMO, MIMO-MAC, Capacity, Sum-rate, Network MIMO,
CoMP, DAS, Rayleigh fading.
\end{IEEEkeywords}

\section{Introduction}\label{sec:introduction}
With the advent of network multiple input multiple output (MIMO)
\cite{Siva07}, base station (BS) collaboration \cite{Fosch06} and
cooperative MIMO \cite{Big07}, it is becoming more common to
consider MIMO links where the receive array, transmit array or both
are widely separated. In these scenarios, individual antennas from a
single effective array may be separated by a considerable distance.
When both transmitter and receiver have distributed antennas, we
refer to the link as a macrodiversity MIMO link. Little is known
analytically about such links, despite their growing importance in
research \cite{Papa08}-\cite{Somekh07} and standards where
coordinated multipoint transmission (CoMP) is part of 3GPP LTE
Advanced.\\
Some analytical progress in this area has been made recently in the
performance analysis of linear combining for macrodiversity systems
in Rayleigh fading \cite{Dushp11, Dushdualuser11}. However, there
appears to be no work currently available on the capacity of general
systems of this type. Similar work includes the capacity analysis of
Rayleigh channels with a two-sided Kronecker correlation structure
\cite{Kiessling04}. However, the Kronecker structure is much too
restrictive for a macrodiversity layout and such results cannot be
leveraged here. Also, there is interesting work on system capacity
for particular cellular structures, including Wyner's circular
cellular array model \cite{Bacha06} and the infinite linear
cell-array model \cite{Somekh07}. Despite these contributions, the
general macrodiversity model appears difficult to handle. The
analytical difficulties are caused by the geographical separation of
the antennas which results in different entries of the channel
matrix having different powers with an arbitrary pattern. Also,
these powers can vary enormously when shadowing and path loss are
considered. Note that this type of channel model also occurs in the work of \cite{Werner06}.\\
In this paper, we consider a macrodiversity MIMO multiple access
channel (MIMO-MAC) where all sources and receive antennas are widely
separated and all links experience independent Rayleigh fading. For
this system, we consider the ergodic sum capacity, under the
assumption of no channel state information (CSI) at the
transmitters. For two sources, we derive the exact ergodic sum
capacity. The result is given in closed form, but the details are
complicated and for more than two sources, it would appear that an
exact approach is too complex to be useful. Hence, we develop an
approximation and a bound for the general case. The first technique
is very accurate, but the functional form is awkward to interpret.
Hence, a second, less accurate but simple bound is developed which
has a familiar and appealing structure. This bound leads to insight
into capacity behavior and its relationship with the channel powers.
In \cite{Dushisit12}, we presented a preliminary study of this
problem, which focussed on the approximation for the general case.
In this paper, we have extended the conference version to include
the exact two source results, correlated channels, full mathematical
details (see Sec. \ref{sec:general:preliminaries}), a motivation for
the approximate analysis (see Appendix
\ref{app:macro_capacity:extended_laplace}) and a much wider range of
scenarios, power profiles and discussion in the results section.\\
Note that, the methodology developed is for the case of arbitrary
powers for the entries in the channel matrix. There is no
restriction due to particular cellular structures. Hence, the
results and techniques may also have applications in multivariate
statistics.\\
The rest of the paper is laid out as follows. Section
\ref{sec:system_model} describes the system model and Sec.
\ref{sec:general:preliminaries} gives some mathematical
preliminaries required in the analysis. Section
\ref{sec:system_analysis_dual_user} provides an exact analysis for
the case of two source antennas. Sections
\ref{sec:system_analysis_general_user} and
\ref{sec:macro_capacity:simple_approximation} consider the case of
arbitrary numbers of sources and develop accurate approximations and
bounds on capacity. Results and conclusions appear in Secs.
\ref{sec:macro_capacity:numerical_analysis} and
\ref{sec:macro_capacity:conclusion}.
\section{System Model}\label{sec:system_model}
Consider a MIMO-MAC link with $M$ base stations and $W$ users
operating over a Rayleigh channel where BS $i$ has $n_{R_i}$ receive
antennas and user $i$ has $n_i$ antennas. The total number of
receive antennas is denoted $n_R=\sum_{i=1}^M n_{R_i}$ and the total
number of transmit antennas is denoted $N=\sum_{i=1}^W n_i$. An
example of such a system is shown in Fig.
\ref{fig:macro_capacity:1}, where three BSs are linked by a backhaul
processing unit (BPU) and communicate with multiple, mobile users.
All channels are considered to be independent since the correlated
channel scenario can be transformed into the independent case as
shown in Sec. \ref{sec:macro_capacity:correlated}. The system
equation is given by
\begin{align}
\bm{r} = \bm{Hs} + \bm{n},
\end{align}
\noindent where $\bm{r}$ is the $\mathcal{C}^{n_R \times 1}$ receive
vector, $\bm{s}$ is the combined $\mathcal{C}^{N \times 1}$
transmitted vector from the $W$ users, $\bm{n}$ is an additive white
Gaussian noise vector, $\bm{n} \sim \mathcal{CN} \left(\bm{0}
,\sigma^2 \bm{I}\right)$, and $\bm{H} \in \mathcal{C}^{n_R \times
N}$ is the composite channel matrix containing the $W$ channel
matrices from the $W$ users. The ergodic sum capacity of the link
depends on the availability of channel state information (CSI) at
the transmitter side. In particular, if no CSI at the transmitter is
assumed, the corresponding ergodic sum capacity is
\cite[pp.~57]{Big07}
\begin{align}
E \left \{ C \right \} &= E \left \{ \log_2 \left| \bm{I} +
\frac{1}{\sigma^2} \bm{HH}^H \right| \right\},
\label{eq:macro_capacity:1}
\end{align}
\begin{figure}[h]
\centerline{\includegraphics*[scale=0.80]{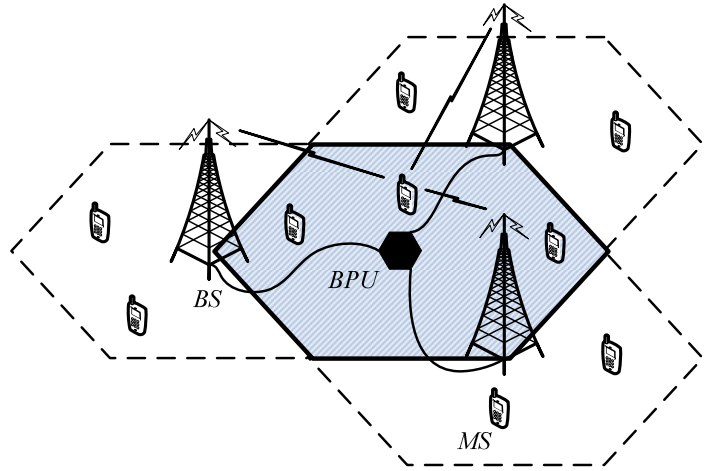}} \caption{A
network MIMO system with a 3 sector cluster. To reduce the clutter,
only paths from a single source are shown.}
\label{fig:macro_capacity:1}
\end{figure}
where $E \left \{ |s_i|^2 \right \}=1$, $i=1,2, \dots, N$, is the
power of each transmitted symbol. It is convenient to label each
column of $\bm{H}$ as $\bm{h}_i$, $i=1,2, \dots, N$, so that
$\bm{H}=\left( \bm{h}_1, \bm{h}_2, \dots , \bm{h}_N \right)$. The
covariance matrix of $\bm{h}_k$ is defined by $\bm{P}_k = E \left \{
\bm{h}_k \bm{h}_k^H \right \}$ and $\bm{P}_k=\mbox{diag}
\left(P_{1k}, P_{2k}, \dots, P_{n_Rk}\right)$. Hence, the $ik^{th}$
element of $\bm{H}$ is $\mathcal{CN} \left(0, P_{ik}\right)$. Using
this notation, we can also express $\bm{h}_k$ as
$\bm{h}_k=\bm{P}^{\frac{1}{2}}_k \bm{u}_k$, where $\bm{u}_k \sim
\mathcal{CN} \left(\bm{0}, \bm{I} \right)$. Note that, for
convenience, all the power information is contained in the
$\bm{P}_k$ matrices so that there is no normalization of the channel
and, in \dref{eq:macro_capacity:1}, the scaling factor in the
capacity equation is simply $1/\sigma^2$.
\subsection{Correlated Channels}\label{sec:macro_capacity:correlated}
Consider the general scenario where sources and/or BSs have multiple
co-located antennas for transmission and reception. Here, spatial
correlation may be present due to the co-located antennas
\cite{Shiu00, Tulino05}. If a Kronecker correlation model is
assumed, then the composite channel matrix is given by
\begin{align}\small
\!\!\!\!\bm{H} \!&=\! \left(\!
\begin{smallmatrix}
\bm{R}_{r1}^{\frac{1}{2}} &  & \bm{0} \\
\  & \ddots &  \\
\bm{0} & & \bm{R}_{rM}^{\frac{1}{2}}  \\
\end{smallmatrix}  \!\right)\!\!
\left(\!
\begin{smallmatrix}
\bm{H}_{w,11} & \ldots & \bm{H}_{w,1W}  \\
\vdots & \ddots & \vdots \\
\bm{H}_{w,M1} & \ldots & \bm{H}_{w,MW}  \\
\end{smallmatrix}  \!\right)\!\!
\left(\!
\begin{smallmatrix}
\bm{R}_{t1}^{\frac{1}{2}} &  & \bm{0}  \\
& \ddots &  \\
\bm{0} &  & \bm{R}_{tW}^{\frac{1}{2}}  \\
\end{smallmatrix}  \! \right),
\label{eq:macro_capacity:correlated_H1}
\end{align}
\noindent where the $\mathcal{C}^{n_{R_k} \times n_i}$ matrix,
$\bm{H}_{w,ik}$, has iid elements since all the channel powers from
user $k$ to BS $i$ are the same. The matrix $\bm{R}_{ri}$ is the
receive correlation matrix at BS $i$ and the matrix $\bm{R}_{tk}$ is
the transmit correlation matrix at source $k$ as defined in
\cite{Shiu00}. Using the spectral decompositions,
$\bm{R}_{ri}=\bm{\Phi}_{ri} \bm{\Lambda}_{ri} \bm{\Phi}_{ri}^H$ and
$\bm{R}_{tk}=\bm{\Phi}_{tk} \bm{\Lambda}_{tk} \bm{\Phi}_{tk}^H$ and
substituting \dref{eq:macro_capacity:correlated_H1} into
\dref{eq:macro_capacity:1} it is easily shown that the capacity with
the channel in \dref{eq:macro_capacity:correlated_H1} is
statistically identical to the capacity with channel
\begin{align}\small
\!\!\!\bm{H} \!&=\! \left(\!
\begin{smallmatrix}
\bm{\Lambda}_{r1}^{\frac{1}{2}} &  & \bm{0} \\
\  & \ddots &  \\
\bm{0} & & \bm{\Lambda}_{rM}^{\frac{1}{2}}  \\
\end{smallmatrix}  \!\right)\!\!
\left(\!
\begin{smallmatrix}
\bm{H}_{w,11} & \ldots & \bm{H}_{w,1W}  \\
\vdots & \ddots & \vdots \\
\bm{H}_{w,M1} & \ldots & \bm{H}_{w,MW}  \\
\end{smallmatrix}  \!\right)\!\!
\left(\!
\begin{smallmatrix}
\bm{\Lambda}_{t1}^{\frac{1}{2}} &  & \bm{0}  \\
& \ddots &  \\
\bm{0} &  & \bm{\Lambda}_{tW}^{\frac{1}{2}}  \\
\end{smallmatrix}  \! \right).
\label{eq:macro_capacity:correlated_H2}
\end{align}
Denoting \dref{eq:macro_capacity:correlated_H2} by
$\bm{H}=\bm{\Lambda}_{r}^{\frac{1}{2}} \bm{H}_{w}
\bm{\Lambda}_{t}^{\frac{1}{2}}$, we see that correlation is
equivalent to a scaling of the channel by the relevant eigenvalues
in $\bm{\Lambda}_{r}$ and $\bm{\Lambda}_{t}$. In particular, the
$\left(u,v \right)^{th}$ element of $\bm{H}$ has power
$\bm{\Lambda}_{r,uu} \bm{\Lambda}_{t,vv} P_{uv}$, where $ P_{uv}$ is
the single link power from transmit antenna $v$ to receive antenna
$u$. Hence, correlation can be handled by the same methodology
developed in Secs.
\ref{sec:system_analysis_dual_user}-\ref{sec:macro_capacity:simple_approximation},
with suitably scaled power values \footnote{Arbitrary fixed transmit
power control techniques can also be handled in the same way as for
the correlated scenario.}.
\section{Preliminaries}\label{sec:general:preliminaries}
In this section we derive some useful results which will be used
extensively throughout the paper.
\begin{Lemma}
Let $\bm{X}$ be an $n\times n$ complex random matrix with,
\begin{align}
\bm{A} \! &=\! E \left\{ \bm{X}\circ \bm{X}\right\} \triangleq  \! \left(
\!
\begin{array}{ccc} \vspace{1mm}
E \left\{ |X_{11}|^2 \right\} & E \left\{ |X_{12}|^2 \right\} & \ldots  \\
\vspace{1mm}
E \left\{ |X_{21}|^2 \right\} & E \left\{ |X_{22}|^2 \right\} & \ldots \\
\vspace{1mm}
\ldots & \ldots & \ldots  \\
E \left\{ |X_{n1}|^2 \right\} & E \left\{ |X_{n2}|^2 \right\} & \ldots  \\
\end{array} \! \right)\!,
\label{eq:general:expected_matrix_definition}
\end{align}
\noindent where $\circ$ represents the Hadamard product. With this
notation, the following identity holds.
\begin{align}
E \left\{ \left|\bm{X}^H\bm{X}\right|\right\} &= \mbox{perm} \left(
\bm{A} \right),
\end{align}
where $perm(.)$ is the permanent of a square matrix defined in
\cite{Minc78}.
\label{lemma:macro_capacity:expected_square_determinant}
\end{Lemma}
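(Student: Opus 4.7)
The plan is to reduce $|\bm{X}^H\bm{X}|$ to $|\det(\bm{X})|^2$, apply the Leibniz expansion of the determinant, and then use the independence and zero-mean structure of the entries of $\bm{X}$ to kill every cross term in the resulting double sum over pairs of permutations, leaving exactly the permanent of $\bm{A}$.

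First I would write $|\bm{X}^H\bm{X}|=\det(\bm{X}^H\bm{X})=|\det(\bm{X})|^2=\det(\bm{X})\,\overline{\det(\bm{X})}$ and expand each factor via Leibniz:
\begin{align}
\det(\bm{X})=\sum_{\sigma\in S_n}\mathrm{sgn}(\sigma)\prod_{i=1}^{n}X_{i\sigma(i)}.
\end{align}
Multiplying the two expansions out yields a double sum over $\sigma,\tau\in S_n$, and taking expectation through the finite sum gives
\begin{align}
E\{|\bm{X}^H\bm{X}|\}=\sum_{\sigma,\tau\in S_n}\mathrm{sgn}(\sigma)\,\mathrm{sgn}(\tau)\,E\!\left[\prod_{i=1}^n X_{i\sigma(i)}\overline{X_{i\tau(i)}}\right].
\end{align}

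Next I would exploit the fact that for each fixed $i$ the factor $X_{i\sigma(i)}\overline{X_{i\tau(i)}}$ involves only entries from row $i$ of $\bm{X}$, and that the rows are mutually independent in the paper's model (the entries of $\bm{H}$, and hence of any submatrix $\bm{X}$ to which the lemma is applied, are independent zero-mean complex Gaussians). Therefore the expectation factors across $i$. When $\sigma(i)\neq\tau(i)$, the two entries in row $i$ are independent and zero-mean, so $E[X_{i\sigma(i)}\overline{X_{i\tau(i)}}]=0$; when $\sigma(i)=\tau(i)$, the factor is $E[|X_{i\sigma(i)}|^2]=A_{i\sigma(i)}$. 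Consequently the entire product vanishes unless $\sigma=\tau$, in which case $\mathrm{sgn}(\sigma)\,\mathrm{sgn}(\tau)=1$ and the row product equals $\prod_i A_{i\sigma(i)}$. Summing the surviving diagonal terms yields $\sum_{\sigma\in S_n}\prod_{i=1}^n A_{i\sigma(i)}=\perm(\bm{A})$ by the definition of the permanent.

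The main obstacle is conceptual rather than computational: one has to identify exactly which hypotheses on $\bm{X}$ are needed to annihilate the $\sigma\neq\tau$ cross terms. The argument only requires independence of the entries together with vanishing means, and no joint Gaussianity or Kronecker structure; stating this cleanly is what justifies invoking the lemma at various later points in the paper. A secondary bookkeeping issue is checking that the row-indexed product $\prod_i A_{i\sigma(i)}$ lines up with the standard definition of $\perm(\cdot)$ in \cite{Minc78}, which it does immediately since the permanent, unlike the determinant, is invariant under the sign of the permutation.
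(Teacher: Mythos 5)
Your proposal is correct and follows essentially the same route as the paper: both expand $|\bm{X}^H\bm{X}|=\overline{\det(\bm{X})}\det(\bm{X})$ via the Leibniz formula into a double sum over permutation pairs, use independence of the entries to annihilate the $\sigma\neq\tau$ terms, and identify the surviving diagonal sum with $\perm(\bm{A})$. Your version is marginally more careful in noting that the cross terms vanish because the entries are independent \emph{and} zero-mean, a hypothesis the paper's proof uses implicitly without stating.
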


\begin{proof}
From the definition of the determinant of a generic matrix, $\bm{X}
= \left \{ X_{i,k}\right \}_{i,k=1\dots n}$, we have
\begin{align}
\begin{split}
E \left\{ \left|\bm{X}^H\bm{X}\right| \right\} = E& \left  \{
\left[\sum_\sigma
\mbox{sgn}(\sigma) \prod_{i=1}^n \bar{X}_{\sigma_i,i}\right] \right.
 \times  \left. \left[\sum_\mu \mbox{sgn}(\mu) \prod_{i=1}^n
X_{\mu_i,i}\right] \right\},
\end{split}
\label{lemma:macro_capacity:expected_square_determinant1}
\end{align}
where $\sigma = \left( \sigma_1, \sigma_2, \dots, \sigma_n \right)$
is a permutation of the integers $1, \dots, n$, the sum is over all
permutations, and $\mbox{sgn}(\sigma)$ denotes the sign of the
permutation. The permutation, $\mu$, in the second summation of
\dref{lemma:macro_capacity:expected_square_determinant1} is defined similarly.
Since all the elements of $\bm{X}$ are independent, the
only terms giving non-zero value expectations are $\prod_{i=1}^n
\bar{X}_{a_i,i}\prod_{i=1}^n X_{b_i,i}$, where permutation $a=b$.
Hence, using the permanent definition in \cite{Minc78} we have
\begin{align}
\begin{split}
E \left\{ \left|\bm{X}^H\bm{X}\right| \right\} &= \sum_\sigma
\prod_{i=1}^n A_{\sigma_i,i} = \mbox{perm}\left(\bm{A}\right).
\end{split}
\end{align}
\end{proof}
\begin{Corollary}
Let $\bm{X}$ be an $m\times n$ random matrix with, $E \left\{
\bm{X}\circ \bm{X}\right\} = \bm{A}$, where $\bm{A}$ is an $m\times n$
deterministic matrix and $m>n$. Then, the following identity holds.
\begin{align}
\begin{split}
E \left\{ \left|\bm{X}^H\bm{X}\right| \right\} = \mbox{Perm} \left(
\bm{A} \right),
\end{split}
\end{align}
where $\mbox{Perm} \left( \bm{A} \right)$ is the permanent of the
rectangular matrix $\bm{A}$ as defined in \cite{Minc78}.
\label{corollary:macro_capacity:expected_rectangular_determinant}
\end{Corollary}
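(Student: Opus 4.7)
The plan is to reduce the rectangular case to the square case already handled in Lemma \ref{lemma:macro_capacity:expected_square_determinant}, using the Cauchy--Binet formula as the bridge. Since $\bm{X}^H$ is $n\times m$ and $\bm{X}$ is $m\times n$ with $m>n$, Cauchy--Binet yields
\begin{align*}
\left|\bm{X}^H\bm{X}\right| = \sum_{S} \left|\bm{X}_S\right|^H \left|\bm{X}_S\right| = \sum_{S} \left|\bm{X}_S^H\bm{X}_S\right|,
\end{align*}
where the sum runs over all $\binom{m}{n}$ subsets $S\subseteq\{1,\dots,m\}$ of size $n$ and $\bm{X}_S$ denotes the $n\times n$ submatrix of $\bm{X}$ obtained by keeping the rows indexed by $S$.

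Next, I would take expectations term-by-term. For each fixed $S$, $\bm{X}_S$ is an $n\times n$ random matrix with independent entries and $E\left\{\bm{X}_S\circ\bm{X}_S\right\}=\bm{A}_S$, where $\bm{A}_S$ is the corresponding $n\times n$ row-submatrix of $\bm{A}$. Lemma \ref{lemma:macro_capacity:expected_square_determinant} then gives
\begin{align*}
E\left\{\left|\bm{X}_S^H\bm{X}_S\right|\right\}=\mbox{perm}\left(\bm{A}_S\right),
\end{align*}
so that, by linearity of expectation,
\begin{align*}
E\left\{\left|\bm{X}^H\bm{X}\right|\right\} = \sum_S \mbox{perm}\left(\bm{A}_S\right).
\end{align*}

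Finally, I would invoke the definition of the permanent of a rectangular matrix from \cite{Minc78}: for an $m\times n$ matrix with $m\geq n$, $\mbox{Perm}(\bm{A})=\sum_\sigma \prod_{i=1}^n A_{\sigma_i,i}$ where $\sigma$ ranges over all injections from $\{1,\dots,n\}$ into $\{1,\dots,m\}$. Grouping these injections by their image $S$ partitions the sum into $\binom{m}{n}$ blocks, and each block is precisely $\mbox{perm}(\bm{A}_S)$. This identifies $\sum_S \mbox{perm}(\bm{A}_S)$ with $\mbox{Perm}(\bm{A})$ and completes the argument.

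The substantive steps are thus all bookkeeping: the only place where real content enters is Lemma \ref{lemma:macro_capacity:expected_square_determinant}, which is already proved. The small subtlety — more a sanity check than an obstacle — is matching the conjugate-transpose in Cauchy--Binet (which produces $\overline{|\bm{X}_S|}\cdot|\bm{X}_S|=|\bm{X}_S|^2\geq 0$) and confirming that Minc's rectangular-permanent definition decomposes as a sum over $n$-row submatrices; once this is checked the identification is immediate.
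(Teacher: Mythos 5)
Your proposal is correct and follows essentially the same route as the paper's own proof: expand $\left|\bm{X}^H\bm{X}\right|$ via Cauchy--Binet into a sum of $\left|\bm{X}_S^H\bm{X}_S\right|$ over row subsets, apply Lemma \ref{lemma:macro_capacity:expected_square_determinant} to each term, and recognize the resulting sum of square permanents as Minc's rectangular permanent. You have simply made explicit the bookkeeping (grouping injections by their image) that the paper leaves implicit.
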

\begin{proof}
Using the Cauchy-Binet formula for the determinant of the product of
two rectangular matrices, we can expand
$\left|\bm{X}^H\bm{X}\right|$ as a sum of products of two square
matrices. Each product of square matrices can be evaluated using
Lemma \ref{lemma:macro_capacity:expected_square_determinant}. The
resulting expression is seen to be the permanent of the rectangular
matrix, $\bm{A}$, which completes the proof.
\end{proof}
\begin{Corollary}
Let $\bm{X}$ be an $m\times n$ random matrix with, $E \left\{
\bm{X}\circ \bm{X}\right\} = \bm{A}$, where $\bm{A}$ is an $m\times n$
deterministic matrix and $m>n$. If the $m\times m$ deterministic
matrix $\bm{\Sigma}$ is diagonal, then the following identity holds.
\begin{align}
\begin{split}
E \left\{ \left|\bm{X}^H\bm{\Sigma}\bm{X}\right| \right\} =
\mbox{Perm} \left( \bm{\Sigma} \bm{A} \right).
\end{split}
\end{align}
\label{corollary:macro_capacity:expected_diagonal_rectangular_determinant}
\end{Corollary}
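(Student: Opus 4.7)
The plan is to absorb the diagonal matrix $\bm{\Sigma}$ into the random matrix $\bm{X}$ by a row-scaling, so that the statement reduces directly to Corollary \ref{corollary:macro_capacity:expected_rectangular_determinant}. In the context of this paper, the diagonal entries of $\bm{\Sigma}$ will ultimately represent channel powers and are therefore non-negative, so I may safely form $\bm{D}=\bm{\Sigma}^{1/2}$, the diagonal matrix of non-negative square roots of the entries of $\bm{\Sigma}$.

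First I would define $\bm{Y}=\bm{D}\bm{X}$. Because $\bm{D}$ is diagonal and real, $\bm{D}^H=\bm{D}$ and $\bm{D}^H\bm{D}=\bm{\Sigma}$, giving $\bm{Y}^H\bm{Y}=\bm{X}^H\bm{\Sigma}\bm{X}$, so the two expected determinants coincide. Next I would compute the entrywise second-moment matrix of $\bm{Y}$. Since $Y_{ij}=d_i X_{ij}$ with $d_i=\sqrt{\sigma_i}$ deterministic, $E\{|Y_{ij}|^2\}=\sigma_i A_{ij}$, and stacking these gives $E\{\bm{Y}\circ\bm{Y}\}=\bm{\Sigma}\bm{A}$.

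Finally I would verify that $\bm{Y}$ satisfies the hypotheses of Corollary \ref{corollary:macro_capacity:expected_rectangular_determinant}: $\bm{Y}$ is $m\times n$ with $m>n$, and because each $Y_{ij}$ is obtained from $X_{ij}$ by a deterministic scalar scaling, independence (and circular symmetry, which was implicitly used in Lemma \ref{lemma:macro_capacity:expected_square_determinant} to kill the off-diagonal cross terms) is inherited from $\bm{X}$. Applying Corollary \ref{corollary:macro_capacity:expected_rectangular_determinant} to $\bm{Y}$ then yields $E\{|\bm{Y}^H\bm{Y}|\}=\mbox{Perm}(\bm{\Sigma}\bm{A})$, which by Step~1 equals $E\{|\bm{X}^H\bm{\Sigma}\bm{X}|\}$, as required.

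There is no substantial analytic obstacle here; the only subtle point is making sure the row-scaling $\bm{D}$ is well defined as a real diagonal matrix, which is why I rely on the non-negativity of the diagonal of $\bm{\Sigma}$. If one wanted to drop that assumption, the same identity could be proved directly by carrying the factors $\sigma_i$ through the permutation expansion used in the proof of Lemma \ref{lemma:macro_capacity:expected_square_determinant}, and then invoking the Cauchy--Binet argument of Corollary \ref{corollary:macro_capacity:expected_rectangular_determinant}; the scalar identity $\mbox{Perm}(\bm{\Sigma}\bm{A})=\sum_\sigma \prod_i \sigma_{\sigma_i}A_{\sigma_i,i}$ makes the bookkeeping go through unchanged.
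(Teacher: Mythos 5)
Your proposal is correct and is essentially the paper's own argument: the paper's one-line proof invokes the fact that $\bm{\Sigma}^{\frac{1}{2}} \circ \bm{\Sigma}^{\frac{1}{2}} = \bm{\Sigma}$ for diagonal $\bm{\Sigma}$, which is exactly your absorption of the row-scaling $\bm{D}=\bm{\Sigma}^{1/2}$ into $\bm{X}$ followed by an appeal to the rectangular-permanent result. Your added remarks (non-negativity of the diagonal needed to form a real square root, and the zero-mean/symmetry condition implicitly used to kill cross terms in Lemma \ref{lemma:macro_capacity:expected_square_determinant}) are sensible points the paper glosses over, but they do not change the route.
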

\begin{proof}
The result follows directly from Lemma
\ref{lemma:macro_capacity:expected_square_determinant}, and the fact
that $\bm{\Sigma}^{\frac{1}{2}} \circ \bm{\Sigma}^{\frac{1}{2}}=
\bm{\Sigma}$ for any diagonal matrix.
\end{proof}
Next, we give a definition for the elementary symmetric function
(esf) of degree $k$ in $n$ variables, $X_1, X_2, \dots, X_n$
\cite{Muir82}. Let $e_k \left(X_1, X_2, \dots, X_n \right)$ be the
$k^{th}$ degree esf, then
\begin{align}
e_k \left(X_1, X_2, \dots, X_n \right) &= \sum_{1 \leq l_1 < l_2 <
\dots < l_k \leq n} X_{l_1} \dots X_{l_k}.
\label{identity:macro_capacity:esf}
\end{align}
It is apparent from \dref{identity:macro_capacity:esf} that $e_0
\left(X_1, X_2, \dots, X_n \right) = 1$ and $e_n \left(X_1, X_2,
\dots, X_n \right) = X_1X_2\dots X_n$. In general, the esf of degree
$k$ in $n$ variables, for any $k\leq n$, is formed by adding
together all distinct products of $k$ distinct variables.
\begin{Lemma}
\cite{Muir82} let $\bm{X}$ be an $n \times n$ complex symmetric
positive definite matrix with  eigenvalues $\lambda_1, \dots,
\lambda_n$. Then, the following identity holds.
\begin{align}
e_k \left(\lambda_1, \lambda_2, \dots, \lambda_n \right) &=
\textup{\Tr}_k \left( \bm{X} \right),
\end{align}
\noindent where
\begin{align}
\textup{\Tr}_k \left( \bm{X} \right) = \begin {cases} \sum_{\sigma}
\left|
\bm{X}_{\sigma_{k,n}} \right| & \quad 1\leq k \leq n \\
1 & \quad k=0 \\
0 & \quad k > n,
\end {cases}
\end{align}
\label{lemma:macro_capacity:esf_trace_identity}
\end{Lemma}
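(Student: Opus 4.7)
The plan is to prove the identity by comparing two different expansions of the characteristic polynomial $p(t)=\left|t\bm{I}-\bm{X}\right|$. On the one hand, using the eigenvalue factorization, $p(t)=\prod_{i=1}^{n}(t-\lambda_{i})$, which I would expand as $\sum_{k=0}^{n}(-1)^{k}e_{k}\!\left(\lambda_{1},\dots,\lambda_{n}\right)t^{n-k}$; this immediately brings the elementary symmetric functions out as the signed coefficients of $t^{n-k}$. On the other hand, I would expand the same determinant directly to exhibit those coefficients as sums of principal minors, after which equating coefficients of $t^{n-k}$ yields the claim.

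For the direct expansion, the cleanest route is to use multilinearity of the determinant in its columns. Writing the $i$-th column of $t\bm{I}-\bm{X}$ as $t\bm{e}_{i}-\bm{X}\bm{e}_{i}$ and applying $n$-linearity gives a sum over subsets $S\subseteq\{1,\dots,n\}$: columns indexed by $S$ contribute the $t\bm{e}_{i}$ part, the complementary columns contribute $-\bm{X}\bm{e}_{i}$. For any fixed $S$ with $|S|=n-k$, the contribution factors as $t^{n-k}(-1)^{k}$ times the determinant of the principal submatrix of $\bm{X}$ on the index set $\{1,\dots,n\}\setminus S$. Summing over all such $S$ shows that the coefficient of $t^{n-k}$ in $p(t)$ is exactly $(-1)^{k}\sum_{\sigma}\left|\bm{X}_{\sigma_{k,n}}\right|=(-1)^{k}\textup{\Tr}_{k}(\bm{X})$ for $1\le k\le n$, while the leading $t^{n}$ coefficient matches $\textup{\Tr}_{0}(\bm{X})=1$ and no higher-degree terms appear, consistent with $\textup{\Tr}_{k}(\bm{X})=0$ for $k>n$.

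Equating the two expressions for the coefficient of $t^{n-k}$ gives $(-1)^{k}e_{k}(\lambda_{1},\dots,\lambda_{n})=(-1)^{k}\textup{\Tr}_{k}(\bm{X})$, and cancelling the sign yields the identity. Note that Hermitian symmetry and positive definiteness are not actually used in the algebra; they only guarantee that the $\lambda_{i}$ are real and positive, which is the regime of interest for the capacity analysis to follow.

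The main obstacle is purely bookkeeping: justifying the subset expansion cleanly so that the signs and the identification of the remaining minor as a principal minor (rather than some off-diagonal minor) are transparent. Using column multilinearity rather than a raw Leibniz expansion makes this step short, because each subset $S$ automatically pairs the "$t$" columns with their matching rows via the $\bm{e}_{i}$ factors, forcing the surviving minor to be principal.
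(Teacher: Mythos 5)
Your proof is correct: expanding the characteristic polynomial $\left|t\bm{I}-\bm{X}\right|$ once via the eigenvalue factorization and once via column multilinearity (which correctly identifies the coefficient of $t^{n-k}$ as $(-1)^{k}$ times the sum of $k\times k$ principal minors), then equating coefficients, is the standard and complete argument, and your observation that Hermitian symmetry and positive definiteness are not needed for the identity itself is also right. Note that the paper does not prove this lemma at all -- it is imported verbatim from the cited reference \cite{Muir82} -- so there is no in-paper derivation to compare against; your proposal simply supplies the textbook proof that the authors chose to omit.
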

\noindent where $\sigma_{k,n}$ is an ordered subset of $\left \{
n\right\}=\left\{1, \dots, n \right\}$ of length $k$ and the
summation is over all such subsets.
$\left.\bm{X}\right._{\sigma_{k,n}}$ denotes the principal submatrix
of $\bm{X}$ formed by taking only the rows and
columns indexed by $\sigma_{k,n}$.\\
In general, $\left.\bm{X}\right._{\sigma_{\ell,n}}^{\mu_{\ell,n}}$
denotes the submatrix of $\bm{X}$ formed by taking only the rows and
columns indexed by $\sigma_{\ell,n}$ and $\mu_{\ell,n}$
respectively, where $\sigma_{\ell,n}$ and $\mu_{\ell,n}$ are length
$\ell$ subsets of $\{1, 2, \dots ,n\}$. If either $\sigma_{\ell,n}$
or $\mu_{\ell,n}$ contains the complete set (i.e.,
$\sigma_{\ell,n}=\{1, 2, \dots ,n\}$ or $\mu_{\ell,n}=\{1, 2, \dots
,n\}$), the corresponding subscript/superscript may be dropped. When
$\sigma_{\ell,n}=\mu_{\ell,n}$, only one
subscript/superscript may be shown for brevity.\\
Next, we present three axiomatic identities for permanents which are required
in Sec. \ref{sec:system_analysis_general_user}.
\begin{itemize}
\item \emph{Axiom 1}: Let $\bm{A}$ be an arbitrary $m\times n$ matrix, then
\begin{align}
\sum_{\sigma} \Perm \left( \left(\bm{A} \right)^{\mu_{0,n}} \right)
&= \sum_{\sigma} \Perm \left( \left(\bm{A} \right)_{\sigma_{0,m}}
\right) = 1. \label{eq:macro_capacity:axiom1}
\end{align}
\item \emph{Axiom 2}: Let $\bm{A}$ be an arbitrary $m\times n$ matrix, then
\begin{align}
\sum_{\sigma} \Perm \left( \left( \bm{A}\right)_{\sigma_{k,m}}
\right) &= \sum_\sigma \Perm \left( \left(
\bm{A}\right)^{\sigma_{k,n}} \right).
\label{eq:macro_capacity:axiom2}
\end{align}
\item \emph{Axiom 3}: For an empty matrix, $\bm{A}$,
\begin{align}
\Perm \left(\bm{A} \right)&=1. \label{eq:macro_capacity:axiom3}
\end{align}
\end{itemize}
\section{Exact Small System Analysis}\label{sec:system_analysis_dual_user}
In this section, we derive the exact ergodic sum capacity in
\dref{eq:macro_capacity:1} for the $N=2$ case. This corresponds to
two single antenna users or a single user with two distributed
antennas.
Here, the channel matrix becomes $\bm{H}=\left( \bm{h}_1, \bm{h}_2
\right)$ and it is straightforward to write
\dref{eq:macro_capacity:1} as
\begin{align}
E \left \{ C \right \} \ln 2 &= E \left \{ \ln \left| \bm{I} +
\frac{1}{\sigma^2} \bm{h}_1 \bm{h}_1^H \right|\right \} \nonumber \\
&+\! E \left \{ \ln \left| \bm{I} \!+\! \frac{1}{\sigma^2} \left(
\bm{I} \!+\! \frac{1}{\sigma^2}\bm{h}_1 \bm{h}_1^H
\right)^{-1}\!\!\!\bm{h}_2 \bm{h}_2^H \right|\right \} \nonumber \label{eq:macro_capacity:2}\\
&\triangleq C_1 + C_2.
\end{align}
Both $C_1$ and $C_2$ can be expressed as scalars \cite{Paulraj03},
\cite[pp.~48]{Helmut96}, so the capacity analysis simply requires
\begin{align}
C_1 &= E \left \{ \ln \left( 1  + \frac{1}{\sigma^2} \bm{h}_1^H
\bm{h}_1 \right) \right \} \label{eq:macro_capacity:C1}, \\
C_2 &= E \left \{ \ln \left( 1 \!+\! \frac{1}{\sigma^2} \bm{h}_2^H
\left( \bm{I} \!+\! \frac{1}{\sigma^2}\bm{h}_1 \bm{h}_1^H
\right)^{-1}\!\!\!\bm{h}_2  \right) \right \}.
\label{eq:macro_capacity:C2}
\end{align}
In order to facilitate our analysis, it is useful to avoid the
logarithm representations in \dref{eq:macro_capacity:C1} and
\dref{eq:macro_capacity:C2}. We exchange logarithms for exponentials
as follows. First, we note the identity,
\begin{align}
\frac{1}{a} &= \int_0^\infty e^{-at}dt, \quad \mbox{for} \quad a >
0. \label{eq:macro_capacity:gamma_identity1}
\end{align}
\noindent Now equation \dref{eq:macro_capacity:gamma_identity1} can be
used to find $\ln a$ as below:
\begin{align}
\frac{\partial \ln a }{\partial a } &= \int_0^\infty e^{-at}dt, \\
\int_0^{\ln a} d \ln a &= \int_1^a \int_0^\infty e^{-a t} dt da, \\
\ln a &= \int_0^\infty \frac{e^{-t}  - e^{-at}}{t} dt.
\label{eq:macro_capacity:exponential_identity}
\end{align}
This manipulation is useful because there are many results which can
be applied to exponentials of quadratic forms, whereas few results
exist for logarithms. As an example, using
\dref{eq:macro_capacity:exponential_identity} in
\dref{eq:macro_capacity:C1} gives
\begin{align}
C_1 &= E \left \{ \int_0^\infty  \frac{e^{-t} -
e^{-\left(1+\frac{1}{\sigma^2} \bm{h}_1^H \bm{h}_1 \right)t}}{t} dt
\right \}. \label{eq:macro_capacity:C1_20}
\end{align}
Note that $a=1+\frac{1}{\sigma^2} \bm{h}_1^H \bm{h}_1$ has been used
in \dref{eq:macro_capacity:exponential_identity}. Sice $a \geq 1$,
it follows that the integrand in \dref{eq:macro_capacity:C1_20} is
non-negative. Also, the expected value, $C_1$, is clearly finite and
so, by Fubini's theorem, the order of expectation and integration in
\dref{eq:macro_capacity:C1_20} may be interchanged.
 Using the Gaussian integral identity \cite{Dushdualuser11}, the
expectation in \dref{eq:macro_capacity:C1_20} can be computed to
give
\begin{align}
C_1 &= \int_0^\infty  \frac{e^{-t}}{t} -
\frac{e^{-t}}{t \left| \bm{\Sigma}_1 \right|} dt,
\label{eq:macro_capacity:C1_2}
\end{align}
\noindent where $\bm{\Sigma}_1 = \bm{I} + \frac{t}{\sigma^2}
\bm{P}_1$. Hence, the log-exponential conversion in \dref{eq:macro_capacity:exponential_identity}
leads to a manageable integral for $C_1$. Using the same approach and applying \dref{eq:macro_capacity:exponential_identity} in \dref{eq:macro_capacity:C2} gives
\begin{align}
C_2 &= E \left \{ \int_0^\infty \frac{e^{-t}}{t} - \frac{ e^{-t -
\frac{t}{\sigma^2}\bm{h}_2^H \left( \bm{I} +
\frac{1}{\sigma^2}\bm{h}_1 \bm{h}_1^H \right)^{-1}\bm{h}_2 } }{t} dt
\right \}. \label{eq:macro_capacity:C2_1}
\end{align}
The expectation in \dref{eq:macro_capacity:C2_1} has to be
calculated in two stages. First, the expectation over $\bm{h}_2$ can
be solved using the Gaussian integral identity \cite{Dushdualuser11} and,
with some simplifications, we arrive at
\begin{align}
C_2 \!&=\! \int_0^\infty \!\! \frac{e^{-t}}{t}  - E_{\bm{h}_1}
\!\left \{ \frac{e^{-t} \left( \sigma^2 + \bm{h}_1^H
\bm{h}_1\right)}{t \left| \bm{\Sigma}_2 \right| \left( \sigma^2 +
\bm{h}_1^H \bm{\Sigma}_2^{-1} \bm{h}_1 \right)} \right \}dt ,
\label{eq:macro_capacity:C2_2}
\end{align}
\noindent where $\bm{\Sigma}_2 = \bm{I} + \frac{t}{\sigma^2}
\bm{P}_2$. Interchange of the expectation and integral in
\dref{eq:macro_capacity:C2_2} follows from the same arguments used
for $C_1$. Equation \dref{eq:macro_capacity:C2_2} can be further
simplified to give
\begin{align}
C_2 = &\int_0^\infty  \frac{e^{-t}}{t}- \frac{e^{-t}}{t \left| \bm{\Sigma}_2 \right|} dt
-  E_{\bm{h}_1} \!\left \{\! \frac{1}{\sigma^2} \int_0^\infty
\frac{e^{-t} \bm{h}_1^H \bm{P}_2 \bm{\Sigma}_2^{-1} \bm{h}_1 }{t
\left| \bm{\Sigma}_2 \right| \left( \sigma^2 + \bm{h}_1^H
\bm{\Sigma}_2^{-1} \bm{h}_1 \right)} dt \right \}.
\label{eq:macro_capacity:C2_3}
\end{align}
Defining the third term in \dref{eq:macro_capacity:C2_3} as $I_b$,
the ergodic sum capacity, $E \left( C \right)= C_1 + C_2$, becomes
\begin{align}
E \left \{ C \right \}  &= \frac{1}{\ln 2} \left \{ \sum_{k=1}^{2}
I_{a_k} - I_b \right \},
\end{align}
where
\begin{align}
I_{a_k} &=  \int_0^\infty \left( \frac{e^{-t}}{t} -
\frac{e^{-t}}{t\left| \bm{\Sigma}_k \right|} \right) dt.
\label{eq:macro_capacity:I_aj1}
\end{align}
Substituting for $\bm{\Sigma}_k$ in \dref{eq:macro_capacity:I_aj1}
and expanding $\left(t \left|\bm{\Sigma}_k\right|\right)^{-1}$ gives
\begin{align}
I_{a_k} &= \sum_{i=1}^{n_R} \eta_{ik} \int_0^\infty \frac{e^{-t}}{t
+ \frac{\sigma^2}{P_{ik}}} dt, \label{eq:macro_capacity:I_aj2}
\end{align}
\noindent where
\begin{align}
\eta_{ik} &= \frac{P_{ik}^{n_R-1}}{\prod_{l \neq i}^{n_R} \left(
P_{ik} - P_{lk}\right)}.\label{eq:macro_capacity:varphi1}
\end{align}
Note that the first $e^{-t}/t$ term in \dref{eq:macro_capacity:I_aj1} cancels out
with one of the terms in the partial fraction expansion leaving only the
linear terms shown in the denominator of \dref{eq:macro_capacity:I_aj2}.
The integrals in \dref{eq:macro_capacity:I_aj2} can be solved in
closed form \cite{GradRzy00} to give
\begin{align}
I_{a_k} &= \sum_{i=1}^{n_R} \eta_{ik} e^{\frac{\sigma^2}{P_{ik}}}
E_1 \left(\frac{\sigma^2}{P_{ik}}\right).
\label{eq:macro_capacity:I_aj3}
\end{align}
In order to compute $I_b$ we use \cite[Lemma~1]{Dushdualuser11} to give
\begin{equation}
I_{b} = -  \int_0^\infty \int_0^\infty
\frac{e^{-t}}{\left|\bm{\Sigma}_2 \right|} \left. \frac{\partial E
\left \{ e^{-\theta_1 z_1 - \theta_2 z_2} \right \} }{\partial
\theta_1}\right|_{\theta_1=0} d\theta_2 dt ,
\label{eq:macro_capacity:I_b1}
\end{equation}
\noindent where $z_1=\bm{h}_1^H \bm{P}_2 \bm{\Sigma}_2^{-1}
\bm{h}_1$ and $z_2=\sigma^2 + \bm{h}_1^H \bm{\Sigma}_2^{-1}
\bm{h}_1$. The expectation in \dref{eq:macro_capacity:I_b1} can be
solved as in \cite{Dushdualuser11}, and with some manipulations we arrive at
\begin{equation}
I_{b} = -\!\! \int_0^\infty \!\!\!\! \int_0^\infty \!\!
\frac{\partial}{\partial \theta_1 } \!\! \left[ \frac{e^{-\sigma^2t
- \sigma^2 \theta_2}}{\left| \bm{I} + t \bm{P}_2 + \theta_1
\bm{P}_1\bm{P}_2
 + \theta_2 \bm{P}_1 \right|} \right]_{\theta_1=0} \!\!\! d\theta_2 dt. \label{eq:macro_capacity:I_b2}
\end{equation}
In Appendix \ref{app:macro_capacity:A}, $I_b$ in
\dref{eq:macro_capacity:I_b2} is calculated in closed form and the
final result is given by
\begin{align}
I_b &=-  \frac{1}{\left| \bm{P}_1 \bm{P}_2 \right|} \left \{
\sum_{i=1}^{n_R} \sum_{k \neq i }^{n_R} \sum_{l \neq i,k }^{n_R}
\frac{\xi_{ikl}\left( \tilde{M}_{b_{ikl}} - \tilde{N}_{b_{ikl}}
\right)}{J_i} \right \}, \label{eq:macro_capacity:I_b4}
\end{align}
\noindent where $\tilde{M}_{b_{ikl}}$, $\tilde{N}_{b_{ikl}}$, $J_i$
and $\xi_{ikl}$ are given in \dref{eq:macro_capacity:I1_bikl3},
\dref{eq:macro_capacity:I2_bikl3}, \dref{eq:macro_capacity:jacobian}
and \dref{eq:macro_capacity:xi1} respectively. Then, the final result becomes
\begin{align}
\begin{split}
E \left \{ C \right \}  &= \frac{1}{\ln 2} \left \{ \sum_{k=1}^{2}
\sum_{i=1}^{n_R} \eta_{ik} e^{\frac{\sigma^2}{P_{ik}}} E_1
\left(\frac{\sigma^2}{P_{ik}}\right) \right. + \left.
\frac{1}{\left| \bm{P}_1 \bm{P}_2 \right|} \left [ \sum_{i=1}^{n_R}
\sum_{k \neq i }^{n_R} \sum_{l \neq i,k }^{n_R}
\frac{\xi_{ikl}\left( \tilde{M}_{b_{ikl}} - \tilde{N}_{b_{ikl}}
\right)}{J_i} \right ] \right \}.
\end{split}
\label{eq:macro_capacity:dual_user}
\end{align}
\section{Approximate General Analysis}\label{sec:system_analysis_general_user}
In this section, we present an approximate ergodic sum rate capacity
analysis for the case where $n_R \geq N > 2$. Extending this to $N
\geq n_R$ is a simple extension of the current analysis. We use the
following notation for the channel matrix,
\begin{subequations}
\begin{align}
\bm{H} &= \left( \tilde{\bm{H}}_N, \bm{h}_N \right) \\
&= \left( \tilde{\bm{H}}_{N-1}, \bm{h}_{N-1}, \bm{h}_N \right) \\
&=  \left( \tilde{\bm{H}}_k , \bm{h}_k \dots, \bm{h}_{N-1}, \bm{h}_N
\right)\\
&= \hspace{10mm} \vdots  \nonumber \\
&= \left( \bm{h}_1, \bm{h}_2, \dots , \bm{h}_N \right),
\label{eq:macro_capacity:general:notation}
\end{align}
\end{subequations}
\noindent where the $n_R \times \left(k-1\right)$ matrix,
$\tilde{\bm{H}}_k$, comprises the $k-1$ columns to the left of
$\bm{h}_k$ in $\bm{H}$. Using the same representation as in
\dref{eq:macro_capacity:2}, the ergodic sum capacity is defined by
\cite{Big07} as,
\begin{align}
E \left \{ C \right \} \ln 2 &\triangleq \sum_{k=1 }^N C_k,
\label{eq:macro_capacity:general:3}
\end{align}
\noindent where
\begin{align}
C_k \!&=\! E \left \{  \ln \left| \bm{I} \!+\! \frac{1}{\sigma^2}
\left( \bm{I} \!+\! \frac{1}{\sigma^2} \tilde{\bm{H}}_k
\tilde{\bm{H}}_k^H
\right)^{-1} \!\!\! \bm{h}_k \bm{h}_k^H \right|\right \} \label{eq:macro_capacity:general:1}\\
&=  E \left \{  \ln \left( 1 + \frac{1}{\sigma^2} \bm{h}_k^H \left(
\bm{I} + \frac{1}{\sigma^2} \tilde{\bm{H}}_k \tilde{\bm{H}}_k^H
\right)^{-1} \!\!\! \bm{h}_k \right) \right \}.
\label{eq:macro_capacity:general:2}
\end{align}
Applying \dref{eq:macro_capacity:exponential_identity} to
\dref{eq:macro_capacity:general:2} gives
\begin{align}
C_k \!&=\! \int_0^\infty \!\! \frac{e^{-t}}{t} - E \left \{
\frac{e^{-t} \left| \sigma^2 \bm{I} + \tilde{\bm{H}}_k^H
\tilde{\bm{H}}_k \right|}{t \left| \bm{\Sigma}_k \right| \left|
\sigma^2 \bm{I} + \tilde{\bm{H}}_k^H \bm{\Sigma}_k^{-1}
\tilde{\bm{H}}_k \right|} \right \}dt,
\label{eq:macro_capacity:general:Cj1}
\end{align}
\noindent where $\bm{\Sigma}_k = \bm{I} + \frac{t}{\sigma^2}
\bm{P}_k$. In order to calculate the second term in
\dref{eq:macro_capacity:general:Cj1}, the following expectation
needs to be calculated,
\begin{align}
\tilde{I}_{k} \left( t \right) &=  \frac{1}{\left| \bm{\Sigma}_k
\right|} E \left\{ \frac{\left| \sigma^2 \bm{I} + \tilde{\bm{H}}_k^H
\tilde{\bm{H}}_k \right|}{\left| \sigma^2 \bm{I} +
\tilde{\bm{H}}_k^H \bm{\Sigma}_k^{-1} \tilde{\bm{H}}_k \right|}
\right \}. \label{eq:macro_capacity:general:ICj1}
\end{align}
Exact analysis of $\tilde{I}_{k} \left( t \right)$ is cumbersome,
and even the $N=2$ case (see the $I_b$ calculation in
\dref{eq:macro_capacity:I_b2}) is complicated. Hence, we employ a
Laplace type approximation \cite{Lib94}, so that $\tilde{I}_{k}
\left( t \right)$ can be approximated by
\begin{align}
\tilde{I}_{k} \left( t \right) &\simeq  \frac{1}{\left|
\bm{\Sigma}_k \right|}  \frac{ E \left\{ \left| \sigma^2\bm{I} +
\tilde{\bm{H}}_k^H \tilde{\bm{H}}_k \right| \right \} }{ E \left\{
\left| \sigma^2 \bm{I} + \tilde{\bm{H}}_k^H \bm{\Sigma}_k^{-1}
\tilde{\bm{H}}_k \right| \right\} }.
\label{eq:macro_capacity:general:ICj2}
\end{align}
Note that the Laplace approximation is better known for ratios of
scalar quadratic forms \cite{Lib94}. However, the identity in both
the numerator and denominator of
\dref{eq:macro_capacity:general:ICj1} can be expressed as the limit
of a Wishart matrix as in \cite{Gao98}. This gives
\dref{eq:macro_capacity:general:ICj1} as ratio of determinants of
matrix quadratic forms which in turn can be decomposed to give a
product of scalar quadratic forms as in Appendix
\ref{app:macro_capacity:extended_laplace} and \cite{mythesis}.
Hence, the Laplace approximation for
\dref{eq:macro_capacity:general:ICj1} has some motivation in the
work of \cite{Lib94}. It can also be thought of as a first order
delta expansion \cite{StOr94}. From Appendix
\ref{app:macro_capacity:numerator_expectation}, the expectation in
the numerator of \dref{eq:macro_capacity:general:ICj2} is given by
\begin{equation}
E \left\{ \left| \sigma^2\bm{I} + \tilde{\bm{H}}_k^H
\tilde{\bm{H}}_k \right| \right \} \! =\! \sum_{i=0}^{k-1}
\sum_{\sigma} \Perm \left ( \left(\bm{Q}_k \right)^{\sigma_{i,k-1}}
\right ) \left( \sigma^2 \right)^{k-i-1},
\label{eq:macro_capacity:Cj_numerator:expectation}
\end{equation}
\noindent where $\bm{Q}_k$ is defined in
\dref{eq:macro_capacity:Qk1}. From Appendix
\ref{app:macro_capacity:denom_expectation}, the expectation in the
denominator of \dref{eq:macro_capacity:general:ICj2} is given by
\begin{align}
\left|\bm{\Sigma}_k\right| E \left\{ \left|\sigma^2 \bm{I} +
\tilde{\bm{H}}_k^H \bm{\Sigma}_k^{-1} \tilde{\bm{H}}_k\right| \right
\} &= \sum_{l=0}^{n_R} \left. t \right.^{l} \varphi_{kl},
\label{eq:macro_capacity:ICj2_denominator:expectation1}
\end{align}
\noindent where $\varphi_{kl}$ is given in
\dref{eq:app:macro_capacity:ICj2_denominator:expectation:varphi1}.
Therefore, $\tilde{I}_{k} \left( t \right)$ becomes
\begin{align}
\tilde{I}_{k} \left( t \right) &\simeq \frac{\Theta \left(\bm{Q}_k
\right)}{\sum_{l=0}^{n_R} \left. t\right.^{l}
\varphi_{kl}} \\
&= \frac{\Theta \left(\bm{Q}_k \right)}{\varphi_{kn_R}
\sum_{l=0}^{n_R} \left( \frac{\varphi_{kl}}{\varphi_{kn_R}}\right)
\left. t\right .^{l}} \label{eq:macro_capacity:general:ICj30} \\
&= \frac{\Theta \left(\bm{Q}_k \right)}{\varphi_{kn_R}
\prod_{l=1}^{n_R} \left(t + \omega_{kl} \right) },
\label{eq:macro_capacity:general:ICj3}
\end{align}
where
\begin{align}
\Theta \left(\bm{Q}_k \right) = \sum_{i=0}^{k-1} \sum_{\sigma} \Perm
\left( \left(\bm{Q}_k \right)^{\sigma_{i,k-1}} \right)
\left(\sigma^2 \right)^{k-i-1}.
\label{eq:macro_capacity:general:theta}
\end{align}
Note that $\omega_{kl} > 0$ for all $l,k$ from Descartes' rule of
signs as all the coefficients of the monomial in the denominator of
\dref{eq:macro_capacity:general:ICj30} are positive. Also note that,
from
\dref{eq:app:macro_capacity:ICj2_denominator:expectation:varphi1},
we have $\Theta \left(\bm{Q}_k \right)=\varphi_{k0}$.
Applying \dref{eq:macro_capacity:general:ICj3} in
\dref{eq:macro_capacity:general:Cj1} gives
\begin{align}
C_k &\simeq \int_0^\infty \!\! \frac{e^{-t}}{t}  -
\frac{\varphi_{k0}}{\varphi_{kn_R}} \frac{e^{-t}}{t
\prod_{l=1}^{n_R} \left(t + \omega_{kl} \right)} dt.
\label{eq:macro_capacity:general:Cj2}
\end{align}
Using a partial fraction expansion for the product in the denominator
of the second term of \dref{eq:macro_capacity:general:Cj2} gives
\begin{align}
\frac{1}{{t \prod_{l=1}^{n_R} \left(t + \omega_{kl} \right)}} &=
\frac{\zeta_{k0}}{t} - \sum_{l=1}^{n_R} \frac{\zeta_{kl}}{\left. t +
\omega_{kl} \right. },
\label{eq:macro_capacity:general:partial_fraction_Cj1}
\end{align}
\noindent where
\begin{align}
\zeta_{k0} &= \frac{1}{\prod_{u = 1 }^{n_R} \omega_{ku} } =
\frac{\varphi_{kn_R}}{\varphi_{k0}}
\label{eq:macro_capacity:general:partial_fraction_Cj_zeta0}
\end{align}
and
\begin{align}
\zeta_{kl} &= \frac{1}{\omega_{kl} \prod_{u \neq l }^{n_R} \left(
\omega_{ku} - \omega_{kl} \right) }.
\label{eq:macro_capacity:general:partial_fraction_Cj_zeta1}
\end{align}
Applying
\dref{eq:macro_capacity:general:partial_fraction_Cj1} in
\dref{eq:macro_capacity:general:Cj1} gives
\begin{align}
C_k &\simeq \frac{\varphi_{k0}}{\varphi_{kn_R}} \sum_{l=1}^{n_R}
\int_0^\infty \frac{\zeta_{kl}}{\left. t +
\omega_{kl} \right. } dt \label{eq:macro_capacity:general:Cj3} \\
&=  \frac{\varphi_{k0}}{\varphi_{kn_R}} \sum_{l=1}^{n_R} \zeta_{kl}
e^{\omega_{kl}} E_1 \left( \omega_{kl} \right).
\label{eq:macro_capacity:general:Cj4}
\end{align}
\noindent Then, applying \dref{eq:macro_capacity:general:Cj4} in
\dref{eq:macro_capacity:general:3} gives the final approximate
ergodic sum rate capacity as
\begin{align}
\!\!\! E \left \{ C \right \} &\triangleq  \frac{1}{\ln 2}\sum_{k=1
}^N \left(  \frac{\varphi_{k0}}{\varphi_{kn_R}} \sum_{l=1}^{n_R}
\zeta_{kl} e^{\omega_{kl}} E_1 \left(\omega_{kl} \right) \right ).
\label{eq:macro_capacity:general:4}
\end{align}
Note the simplicity of the general approximation in
\dref{eq:macro_capacity:general:4} in comparison to the two-user
exact results in \dref{eq:macro_capacity:dual_user}.
\section{A Simple Capacity Bound}\label{sec:macro_capacity:simple_approximation}
In this section, we derive an extremely simple upper-bound for the
ergodic capacity in \dref{eq:macro_capacity:1}. This provides a
simpler relationship between the average link powers and ergodic sum
capacity at the expense of a loss in accuracy. Using Jensen's
inequality \cite{Tom91} and $\bar{\gamma}=\frac{1}{\sigma^2}$,
equation \dref{eq:macro_capacity:1} leads to
\begin{align}
E \left \{ C  \right \}  &\leq  \log_2 \left( E \left \{ \left|
\bm{I} + \bar{\gamma} \bm{H}^H \bm{H} \right| \right \} \right).
\label{eq:macro_capacity:capacity_bound2}
\end{align}
From Appendix \ref{app:macro_capacity:numerator_expectation},
\dref{eq:macro_capacity:capacity_bound2} can be given as
\begin{align}
E \left \{ C  \right \}  &\leq  \log_2 \left( \sum_{i=0}^{N}
\sum_{\sigma} \Perm \left(\bm{P}^{\sigma_{i,N}}\right)
\bar{\gamma}^{i} \right), \label{eq:macro_capacity:capacity_bound3}
\\
&= \log_2 \left( \sum_{i=0}^{N} \vartheta_i \bar{\gamma}^{i}
\right).\label{eq:macro_capacity:capacity_bound31}
\end{align}
\noindent where $\bm{P}=\left(P_{ik}\right)$. The simplicity of
\dref{eq:macro_capacity:capacity_bound3} is hidden by the permanent
form. For small systems, expanding the permanent reveals the simple
relationship between the upper bound and the channel powers. For
$n_R=N=2$ and $n_R=N=3$, \dref{eq:macro_capacity:capacity_bound31}
gives the upper bounds in
\dref{eq:macro_capacity:capacity_bound:2by2} and
\dref{eq:macro_capacity:capacity_bound:3by3} respectively. These
bounds show the simple pattern where cross products of $L$ channel
powers scale the $\bar{\gamma}^L$ term. Hence, at low SNR where the
$\bar{\gamma}$ term is dominant, $\log_2\left( 1 + P_T \bar{\gamma}
\right)$, where $P_T=\sum_i \sum_k P_{ik}$, is an approximation to
\dref{eq:macro_capacity:capacity_bound31}. Similarly, at high SNR,
the $\bar{\gamma}^N$ term is dominant and $\log_2\left( 1 + \Perm
\left(\bm{P}\right)\bar{\gamma}^N \right)$ is an approximation.
These approximations show that capacity is affected by the sum of
the channel powers at low SNR, whereas at high SNR, the cross
products of $N$ powers becomes important.
\begin{table*}[!t]
\small
%
\begin{align}
E \left \{ C  \right \}  &\leq \log_2 \left( 1 + \bar{\gamma} \left(
P_{11} + P_{12} + P_{21} + P_{22}\right) + \bar{\gamma}^{2} \left(
P_{11}P_{22} + P_{12}P_{21} \right) \right).
\label{eq:macro_capacity:capacity_bound:2by2}
\end{align}
\begin{align}
\begin{split}
E \left \{ C  \right \}  \leq \log_2 &\left( 1 + \bar{\gamma} \left(
P_{11} + P_{12} + P_{13} + P_{21} + P_{22} + P_{23} + P_{31} +
P_{32} + P_{33} \right) \right. \\
&+ \left. \bar{\gamma}^{2} \left( P_{11}P_{22} + P_{11}P_{32} +
P_{21}P_{12} + P_{21}P_{32} + P_{31}P_{12} + P_{31}P_{22} +
P_{11}P_{23} + P_{11}P_{33} + P_{21}P_{13}  \right. \right. \\
&+ \left. \left.
P_{21}P_{33}+P_{31}P_{13}+P_{31}P_{23}+P_{12}P_{23}+P_{12}P_{33}+P_{22}P_{13}+P_{22}P_{33}+P_{32}P_{13}+P_{32}P_{23}
\right) \right. \\
&+ \left.  \bar{\gamma}^{3} \left( P_{11}P_{22}P_{33} +
P_{11}P_{23}P_{32} + P_{12}P_{21}P_{33} + P_{12}P_{31}P_{23} +
P_{13}P_{21}P_{32} + P_{13}P_{22}P_{31} \right) \right).
\end{split}
\label{eq:macro_capacity:capacity_bound:3by3}
\end{align}
\hrulefill
\end{table*}

\section{Numerical and Simulation Results}\label{sec:macro_capacity:numerical_analysis}
For the numerical results, we consider three distributed BSs with
either a single receive antenna or two antennas. For a two-source
system, we parameterize the system by three parameters
\cite{Dushp11, Dushdualuser11}. The average received signal to noise
ratio is defined by $\rho=P_T / \sigma^2$. In particular for a
two-source system, $\rho= \left( \Tr\left( \bm{P}_1\right) +
\Tr\left( \bm{P}_2 \right)\right) /\sigma^2$. The total signal to
interference ratio is defined by $\varsigma = \Tr\left(
\bm{P}_1\right) / \Tr\left( \bm{P}_2\right)$. The spread of the
signal power across the three BS locations is assumed to follow an
exponential profile, as in \cite{Gao98}, so that a range of
possibilities can be covered with only one parameter. The
exponential profile is defined by
\begin{align} P_{ik} &= K_k \left( \alpha \right) \alpha^{i-1},
\end{align}
for receive location $i \in \left \{1,2,3 \right \}$ and source $k$
where
\begin{align}
K_k \left( \alpha \right) &= \Tr\left( \bm{P}_k\right) /
\left(1+\alpha + \alpha^2 \right), \quad k=1,2,
\end{align}
and $\alpha > 0$ is the parameter controlling the uniformity of the
powers across the antennas. Note that as $\alpha \rightarrow 0$ the
received power is dominant at the first location, as $\alpha$
becomes large $\left( \alpha \gg 1 \right)$ the third location is
dominant and as $\alpha \rightarrow 1$ there is an even spread, as
in the standard microdiversity scenario. Using these parameters,
eight scenarios are given in Table \ref{table:mmse_zf_scenarios} for
the case of two single antenna users.
In Fig. \ref{fig:macro_capacity:s1-s5}, we explore the capacity of
scenarios S1-S4 where $n_R=3$. The capacity result in
\dref{eq:macro_capacity:dual_user} agrees with the simulations for
all scenarios, thus verifying the exact analysis. Furthermore, the
approximation in \dref{eq:macro_capacity:general:4} is shown to be
extremely accurate. All capacity results are extremely similar
except for S1, where both sources have their dominant path at the
first receive antenna. Here, the system is largely overloaded (two
strong signals at a single antenna) and the performance is lower.
The similarity of S3 and S4 is interesting as they represent very
different systems. In S3, the two users are essentially separate
with the dominant channels being at different antennas. In S4, both
users have power equally spread over all antennas so the users are
sharing all available channels. Figure
\ref{fig:macro_capacity:s6-s10} follows the same pattern with S6
(the overloaded case) being lower and the other scenarios almost
equivalent. In Fig. \ref{fig:macro_capacity:s6-s10}, the overall
capacity level is reduced in comparison to Fig.
\ref{fig:macro_capacity:s1-s5} as $\varsigma=10$ implies a weaker
second source.\\
Figures \ref{fig:macro_capacity:3user} and
\ref{fig:macro_capacity:6user} show results for a random drop
scenario with $M=n_R=3, W=N=3$ and $n_R=6, M=3, W=N=6$ respectively.
In each random drop, uniform random locations are created for the
users and lognormal shadow fading and path loss are considered where
$\sigma_{SF}=8$dB (standard deviation of shadow fading) and
$\gamma=3.5$ (path loss exponent). The transmit power of the sources
is scaled so that all locations in the coverage area have a maximum
received SNR greater than $3$dB, at least $95\%$ of the time. The
maximum SNR is taken over the 3 BSs. Hence, each drop produces a
different $\bm{P}$ matrix and independent channels are assumed.
The excellent agreement between the approximation in
\dref{eq:macro_capacity:general:4} and the simulations in both Fig.
\ref{fig:macro_capacity:3user} and Fig.
\ref{fig:macro_capacity:6user} is encouraging as this demonstrates
the accuracy of \dref{eq:macro_capacity:general:4} over different
system sizes as well as over completely different sets of channel
powers. Note that at high SNR, Fig. \ref{fig:macro_capacity:6user}
gives much higher capacity values than Fig.
\ref{fig:macro_capacity:3user} since there are 6 receive antenna
rather than 3. In this high SNR region, the $\bar{\gamma}^{N}$ term
in \dref{eq:macro_capacity:capacity_bound31} dominates and capacity
can be approximated by $\log_2 \left(1 + \Perm \left(\bm{P}
\right)\bar{\gamma}^{N}\right)$. With $n_R=N=3$ there are 6 cross
products in $\Perm \left(\bm{P} \right)$ whereas with $n_R=N=6$
there are 720 cross products. Hence, the bound clearly demonstrates
the benefits of increased antenna numbers. In practice there is a
trade-off between the costs of increased collaboration between
possibly distant BSs and the resulting increase in system capacity.
In Figs. \ref{fig:macro_capacity:s1-s5} -
\ref{fig:macro_capacity:6user}, at low SNR the capacity is
controlled by $P_T$. Hence, since $\rho=P_T / \sigma^2$, all four
drops have similar performance at low SNR and diverge at higher SNR
where the channel profiles affect the results.
The upper bound and associated approximations are shown in Figs.
\ref{fig:macro_capacity:user2_S3} and
\ref{fig:macro_capacity:random_scenario} both for a two user
scenario (S3) and a random drop. In Fig.
\ref{fig:macro_capacity:user2_S3}, the upper bound is shown for
scenario S3 as well as the high and low SNR approximations. The
results clearly show the loss in accuracy resulting from the use of
the simple Jensen bound. However, the bound is quite reasonable over
the whole SNR range. The low SNR approximations are quite reasonable
below 0 dB and the high SNR version is as accurate as the bound
above 15 dB. In Fig. \ref{fig:macro_capacity:random_scenario},
similar results are shown for a random drop with $M=n_R=6, W=N=6$.
Here, similar patterns are observed but the low and high SNR
approximations become reasonable at more widely spread SNR values.
For example, the low SNR results are accurate below 0 dB and the
high SNR results are poor until around 30 dB. In contrast, the upper
bound is reasonable throughout. Hence, although there is a
noticeable capacity error at high SNR, the cross-product
coefficients in \dref{eq:macro_capacity:capacity_bound:2by2} and
\dref{eq:macro_capacity:capacity_bound:3by3} are seen to explain the
large majority of the ergodic capacity behavior.

\begin{table}[Parameters for Figures]
    \caption{Parameters for Figures \ref{fig:macro_capacity:s1-s5} and \ref{fig:macro_capacity:s6-s10}}
    \centering
    \begin{tabular}{ c | l | l | c }
    \hline \hline
    & \multicolumn{2}{|c|}{Decay Parameter} & \\ \cline{2-3}
    Sc. No.  & User 1 & User 2 &  $\varsigma$  \\ \hline
    S1 &  $\alpha=0.1$ & $\alpha=0.1$ & 1\\
    S2 &  $\alpha=0.1$ & $\alpha=1$ & 1 \\
    S3 &  $\alpha=0.1$ & $\alpha=10$ & 1 \\
    S4 &  $\alpha=1$ & $\alpha=1$ & 1 \\ \hline
    S5 & $\alpha=0.1$ & $\alpha=0.1$ & 10  \\
    S6  & $\alpha=0.1$ & $\alpha=1$ & 10\\
    S7  & $\alpha=0.1$ & $\alpha=10$ & 10\\
    S8 & $\alpha=1$ & $\alpha=0.1$ & 10\\
    \hline
    \end{tabular}
\label{table:mmse_zf_scenarios} 
\end{table}

\begin{figure}[h]
\centerline{\includegraphics*[scale=0.65]{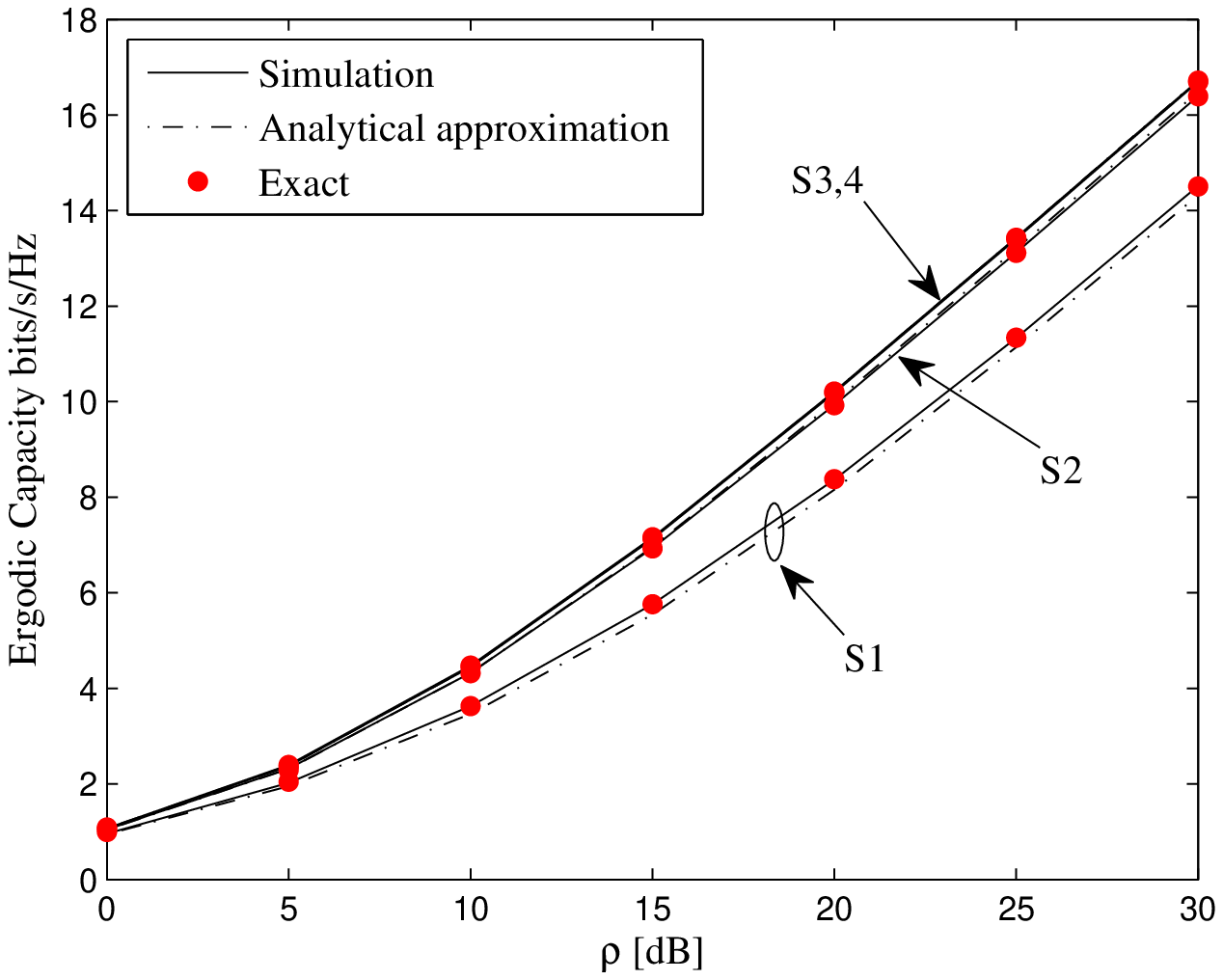}}
\caption{Exact, approximated and simulated ergodic sum capacity in
flat Rayleigh fading for scenarios S1-S4 with parameters: $n_R=3$,
$N=W=2$ and $\varsigma=1$.} \label{fig:macro_capacity:s1-s5}
\end{figure}

\begin{figure}[h]
\centerline{\includegraphics*[scale=0.65]{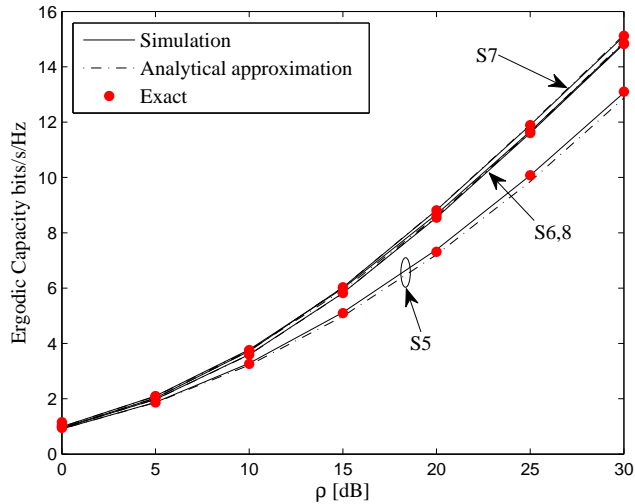}}
\caption{Exact, approximated and simulated ergodic sum capacity in
flat Rayleigh fading for scenarios S5-S8 with parameters: $n_R=3$,
$N=W=2$ and $\varsigma=10$.} \label{fig:macro_capacity:s6-s10}
\end{figure}
\begin{figure}[h]
\centerline{\includegraphics*[scale=0.65]{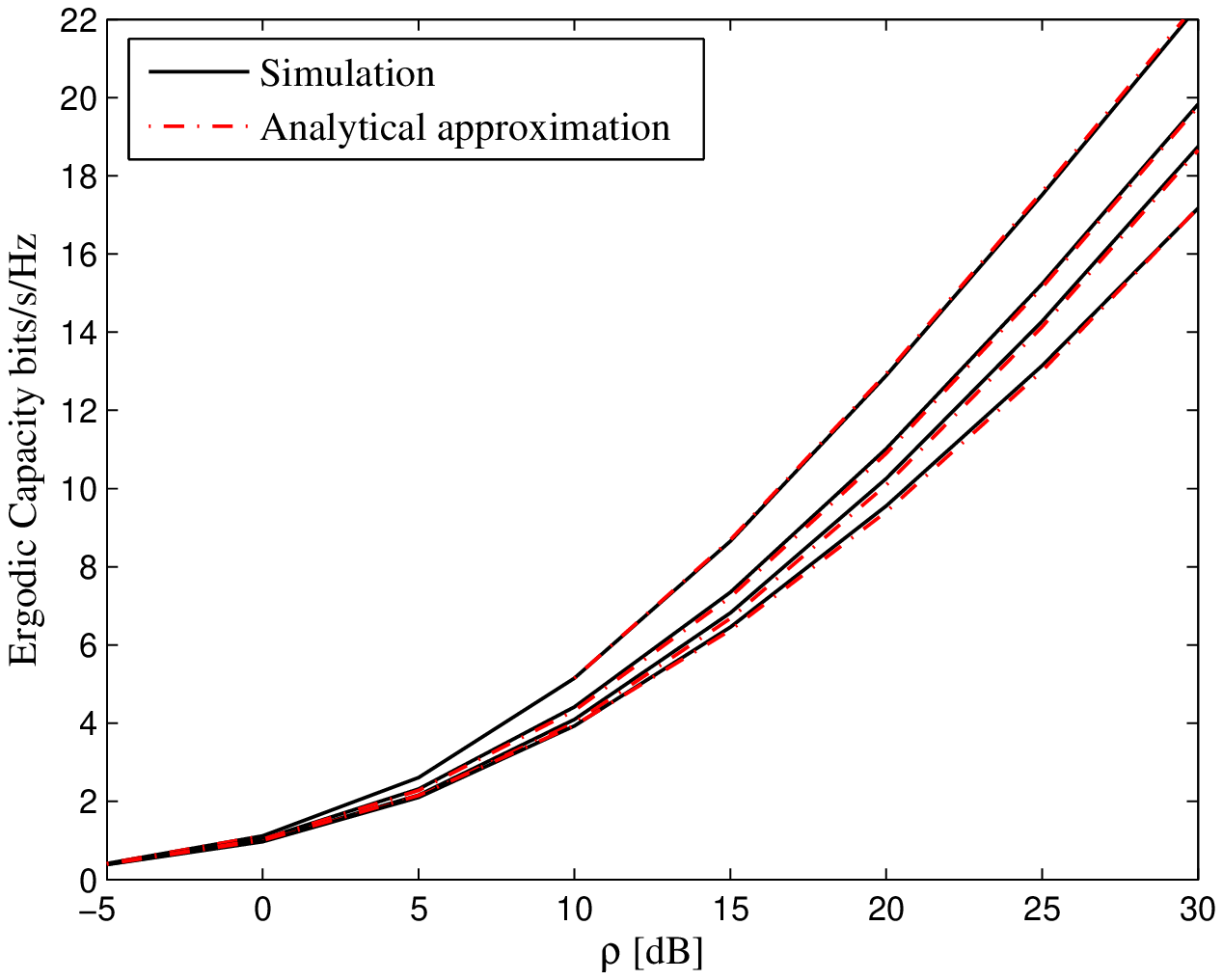}}
\caption{Approximated and simulated ergodic sum capacity in flat
Rayleigh fading for $M=n_R=3$, $W=N=3$ and four random drops.}
\label{fig:macro_capacity:3user}
\end{figure}

\begin{figure}[h]
\centerline{\includegraphics*[scale=0.65]{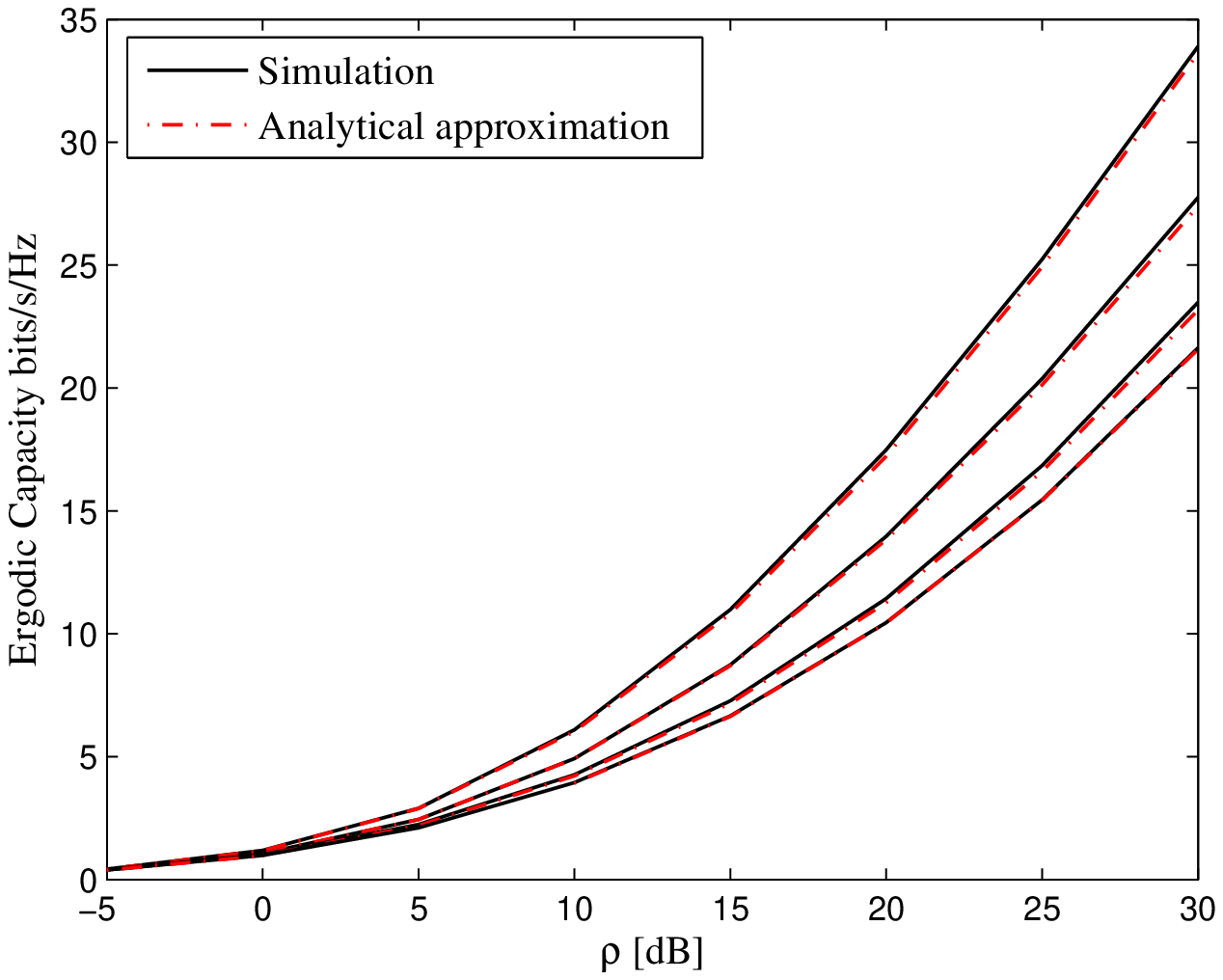}}
\caption{Approximated and simulated ergodic sum capacity in flat
Rayleigh fading for $n_R=6$, $M=3$, $W=N=6$ and four random drops.}
\label{fig:macro_capacity:6user}
\end{figure}

\begin{figure}[h]
\centerline{\includegraphics*[scale=0.65]{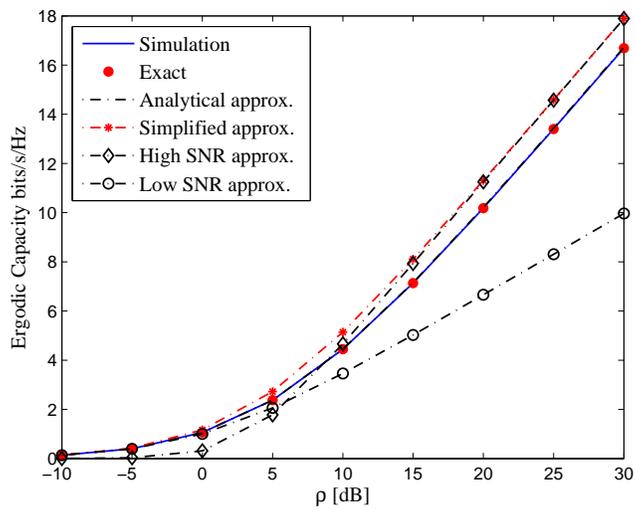}}
\caption{Ergodic sum capacity in flat Rayleigh fading for scenario
S3 with parameters: $M=n_R=3$, $W=N=2$ and $\varsigma=1$.}
\label{fig:macro_capacity:user2_S3}
\end{figure}

\begin{figure}[h]
\centerline{\includegraphics*[scale=0.65]{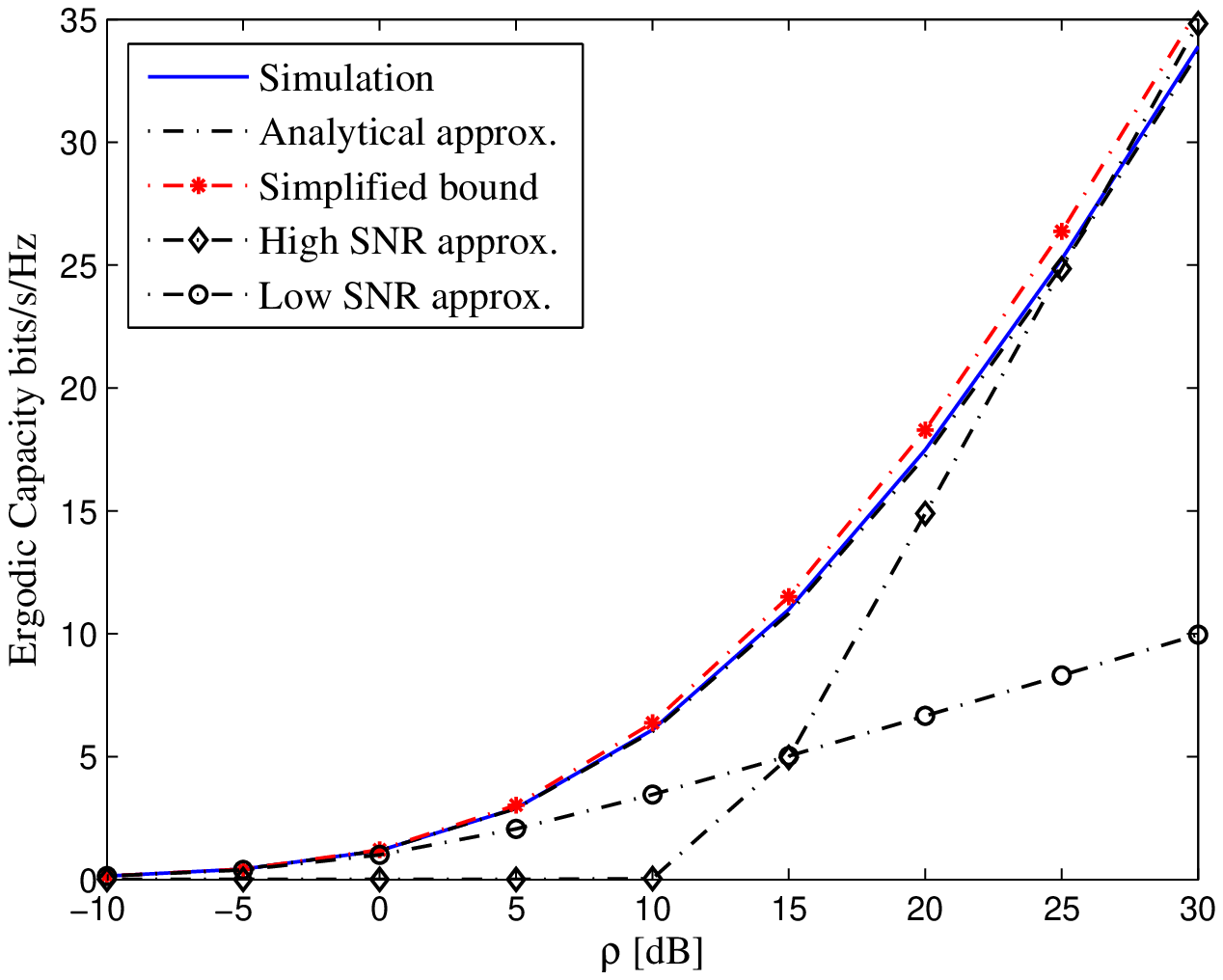}}
\caption{Ergodic sum capacity in flat Rayleigh fading for a random
drop with parameters: $n_R=6$, $M=3$ and $W=N=6$.}
\label{fig:macro_capacity:random_scenario}
\end{figure}

\section{Conclusion}\label{sec:macro_capacity:conclusion}
In this paper, we have studied the ergodic sum capacity of a
Rayleigh fading macrodiversity MIMO-MAC. The results obtained are
shown to be valid for both independent channels and correlated
channels, which may occur when some of the distributed
transmit/receive locations have closely spaced antennas. In
particular, we derive exact results for the two-source scenario and
approximate results for the general case. The approximations have a
simple form and are shown to be very accurate over a wide range of
channel powers. In addition, a simple upper bound is presented which
demonstrates the importance of various channel power cross products
in determining capacity.
\appendices
\section{Derivation of $I_b$ }\label{app:macro_capacity:A}
From \dref{eq:macro_capacity:I_b2}, $I_b$ can be written as
\begin{align}
I_b &= \left. \frac{\partial \tilde{I}_b}{\partial \theta_1}
\right|_{\theta_1=0}, \label{eq:macro_capacity:I_b3}
\end{align}
where
\begin{equation}
\tilde{I}_{b} = -\!\! \int_0^\infty \!\!\!\! \int_0^\infty \!\!
\frac{e^{-\sigma^2t - \sigma^2 \theta_2}}{ \prod_{i=1 }^{n_R} \left(
1 + t P_{i2} + \theta_1 P_{i1} P_{i2} + \theta_2 P_{i1} \right) }
d\theta_2 dt. \label{eq:macro_capacity:I_b'1}
\end{equation}
From \dref{eq:macro_capacity:I_b'1}, $L_{b}$ becomes
\begin{align}
L_{b} = \int_0^\infty \!\!\!\! \int_0^\infty \!\!\!\!
\frac{e^{-\sigma^2t - \sigma^2 \theta_2}}{ \prod_{i=1 }^{n_R} \left(
\theta_1 + \frac{\theta_2}{P_{i2}} + \frac{t}{P_{i1}} +
\frac{1}{P_{i1}P_{i2}} \right) } d\theta_2 dt.
\label{eq:macro_capacity:I_b'2}
\end{align}
Defining
\begin{align}
L_{b} &=  - \left| \bm{P}_1 \bm{P}_2 \right| \tilde{I}_{b},
\label{eq:macro_capacity:I_b''1}
\end{align}
we use a partial fraction expansion in $\theta_1$ to give
\begin{align}
L_{b} =  \sum_{i=1}^{n_R} \int_0^\infty \!\!\!\! \int_0^\infty \!
 \frac{A_i \left(\theta_2,t \right) e^{-\sigma^2 t - \sigma^2 \theta_2}}{\left(
\theta_1 + \frac{\theta_2}{P_{i2}} + \frac{t}{P_{i1}} +
\frac{1}{P_{i1}P_{i2}} \right) } d\theta_2 dt,
\label{eq:macro_capacity:I_b''2}
\end{align}
\noindent where
\begin{subequations}
\begin{align}
A_i  \left(\theta_2,t \right) &= \frac{1}{\prod_{k \neq i }^{n_R}
\left( \alpha_{ik}\theta_2 + \beta_{ik}t + \gamma_{ik}
\right)}\label{eq:macro_capacity:eta1} \\
\alpha_{ik} &= \frac{1}{P_{k2}} - \frac{1}{P_{i2}}\\
\beta_{ik} &= \frac{1}{P_{k1}} - \frac{1}{P_{i1}} \\
\gamma_{ik} &= R_k - R_i \\
R_i &= \frac{1}{P_{i1}P_{i2}}.
\end{align}
\end{subequations}
To compute \dref{eq:macro_capacity:I_b''2}, the following
substitutions are employed
\begin{subequations}
\begin{align}
u= \sigma^2 t + \sigma^2 \theta_2 \\
v_i = \frac{t}{P_{i1}} +  \frac{\theta_2}{P_{i2}}.
\label{eq:macro_capacity:transformation1}
\end{align}
\end{subequations}
The Jacobian of the transformation in
\dref{eq:macro_capacity:transformation1} can be calculated as
\begin{align}
J_i = \sigma^2 \left( \frac{1}{P_{i2}} - \frac{1}{P_{i1}} \right).
\label{eq:macro_capacity:jacobian}
\end{align}
Substituting \dref{eq:macro_capacity:transformation1} and
\dref{eq:macro_capacity:jacobian} in \dref{eq:macro_capacity:I_b''2}
gives
\begin{align}
L_{b} =  \sum_{i=1}^{n_R} \int_0^\infty \!\!\!\!
\int_{\frac{u}{P_{i1}\sigma^2}}^{\frac{u}{P_{i2}\sigma^2}} \frac{A_i
\left(u,v_i \right) e^{-u}}{J_i \left(  v_i + \theta_1 + R_i
\right)} dv_i du, \label{eq:macro_capacity:I_b''3}
\end{align}
\noindent where
\begin{subequations}
\begin{align}
A_i  \left(u,v_i \right) &= \frac{1}{\prod_{k \neq i }^{n_R} \left(
a_{ik}v_i + b_{ik}u + \gamma_{ik}
\right)}\label{eq:macro_capacity:eta2}\\
a_{ik} &= \frac{\sigma^2}{J_i}\left( \alpha_{ik} - \beta_{ik} \right) \\
b_{ik} &= \frac{1}{J_i}\left( \frac{\beta_{ik}}{P_{i2}} -
\frac{\alpha_{ik}}{P_{i1}} \right).
\end{align}
\end{subequations}
The term $A_i  \left(u,v_i \right)$ in \dref{eq:macro_capacity:eta2}
can be written as a summation using partial fractions, to give
\begin{align}
A_i  \left(u,v_i \right) &= \sum_{k \neq i }^{n_R}
\frac{B_{ik}\left(u\right)}{v_i + q_{ik} u + r_{ik}},
\label{eq:macro_capacity:eta3}
\end{align}
\noindent where
\begin{subequations}
\begin{align}
B_{ik} \left(u\right) &= \frac{\left( a_{ik} \right
)^{n_R-3}}{\prod_{l
\neq i,k }^{n_R} \left( c_{ikl} u +  d_{ikl}\right)} \label{eq:macro_capacity:Bu} \\
c_{ikl} &= b_{il} a_{ik} - a_{il} b_{ik} \\
d_{ikl} &=  a_{ik} \gamma_{il}  -
\gamma_{ik} a_{il}\\
q_{ik} &= \frac{b_{ik}}{a_{ik}}\\
r_{ik} &= \frac{\gamma_{ik}}{a_{ik}}.
\end{align}
\end{subequations}
Substituting \dref{eq:macro_capacity:eta3} in
\dref{eq:macro_capacity:I_b''3} and simplifying gives
\begin{align}
\!\!\!\!\!L_{b} =  \sum_{i=1}^{n_R} &\sum_{k \neq i }^{n_R}
\int_0^\infty
\int_{\frac{u}{P_{i1}\sigma^2}}^{\frac{u}{P_{i2}\sigma^2}}
\frac{B_{ik}  \left(u\right) e^{-u}}{J_i}
\left. \frac{dv_i du}{\left(  v_i + \theta_1 + R_i
\right)\left( v_i + q_{ik} u + r_{ik} \right)} \right. .
\label{eq:macro_capacity:I_b''4}
\end{align}
First, we integrate over $v_i$ in \dref{eq:macro_capacity:I_b''4} to
give
\begin{align}
\!\!\!\!\!\!\!L_{b} =  \sum_{i=1}^{n_R} &\sum_{k \neq i }^{n_R}
\int_0^\infty
\frac{C_{ik}  \left(u, \theta_1 \right) e^{-u}}{J_i}
 \ln \left[ \frac{\left(\frac{u}{P_{i2}\sigma^2} + \theta_1 +
R_i \right)\!\!\left( \lambda_{ik}u + r_{ik}
\right)}{\left(\frac{u}{P_{i1}\sigma^2} + \theta_1 + R_i
\right)\!\!\left( \mu_{ik}u + r_{ik} \right)} \right ] du,
\label{eq:macro_capacity:I_b''5}
\end{align}

\noindent where
\begin{subequations}
\begin{align}
C_{ik}  \left(u, \theta_1 \right) &= \frac{B_{ik}
\left(u\right)}{q_{ik}u + r_{ik} - \theta_1 - R_i} \\
\lambda_{ik} &= \frac{1}{P_{i1}\sigma^2} + q_{ik} \\
\mu_{ik} &= \frac{1}{P_{i2}\sigma^2} + q_{ik}.
\label{eq:macro_capacity:eta01}
\end{align}
\end{subequations}
Let
\begin{subequations}
\begin{align}
D_{ik} \left( u, \theta_1 \right) &= \ln \left[
\frac{\left(\frac{u}{P_{i2}\sigma^2} + \theta_1 + R_i
\right)\!\!\left( \lambda_{ik}u + r_{ik}
\right)}{\left(\frac{u}{P_{i1}\sigma^2} + \theta_1 + R_i
\right)\!\!\left( \mu_{ik}u + r_{ik} \right)} \right],
\label{eq:macro_capacity:omega1}
\end{align}
\end{subequations}
then $B_{ik} \left(u\right)$ in \dref{eq:macro_capacity:Bu} can be
rewritten as the summation
\begin{align}
B_{ik} \left(u\right) = \sum_{l \neq i,k }^{n_R}
\frac{\xi_{ikl}}{c_{ikl} u +  d_{ikl}},
\label{eq:macro_capacity:eta'1}
\end{align}
\noindent where
\begin{align}
\xi_{ikl} = \frac{\left(a_{ik} c_{ikl}\right)^{n_R-3}}{\prod_{z \neq
i,k,l }^{n_R} \left( d_{ikz} c_{ikl} - c_{ikz} d_{ikl} \right)}.
\label{eq:macro_capacity:xi1}
\end{align}
Substituting \dref{eq:macro_capacity:eta'1} and
\dref{eq:macro_capacity:eta01} in \dref{eq:macro_capacity:I_b''5}
gives
\begin{align}
L_{b} =  \sum_{i=1}^{n_R} &\sum_{k \neq i }^{n_R} \sum_{l \neq i,k
}^{n_R} \int_0^\infty D_{ik} \left( u, \theta_1 \right) \frac{\xi_{ikl}}{J_i}
\frac{du}{\left( c_{ikl} u + d_{ikl} \right) \left( q_{ik}u
+ r_{ik} - \theta_1 - R_i \right)}.
\label{eq:macro_capacity:I_b''6}
\end{align}
Equation \dref{eq:macro_capacity:I_b''6} can be further simplified
to give
\begin{align}
L_{b} &= \sum_{i=1}^{n_R} \sum_{k \neq i }^{n_R} \sum_{l \neq i,k
}^{n_R} \frac{\xi_{ikl}\left( M_{b_{ikl}} - N_{b_{ikl}}
\right)}{J_i} , \label{eq:macro_capacity:I_b''7}
\end{align}
\noindent where
\begin{align}
M_{b_{ikl}} &= \int_0^\infty \frac{D_{ik}\left( u, \theta_1
\right)}{ f_1\left(\theta_1 \right)}  \frac{du}{\left( u +
\varepsilon_{ikl} \right)},  \label{eq:macro_capacity:I1_bikl1}
\end{align}
\begin{align}
N_{b_{ikl}} &= \int_0^\infty \frac{D_{ik}\left( u, \theta_1
\right)}{ f_1\left(\theta_1 \right)}  \frac{du}{\left( u +
f_2\left(\theta_1 \right) \right)},
\label{eq:macro_capacity:I2_bikl1}
\end{align}
\noindent and $\varepsilon_{ikl} = d_{ikl} / c_{ikl}$. Next, we
introduce the following linear functions of $\theta_1$:
\begin{align}
f_1\left(\theta_1 \right) &= n_{ikl} - c_{ikl} \theta_1 \label{eq:macro_capacity:theta_1_functions1}\\
f_2\left(\theta_1 \right) &= m_{ikl} - \frac{\theta_1}{q_{ik}},
\label{eq:macro_capacity:theta_1_functions2}
\end{align}
\noindent where
\begin{align}
n_{ikl} &= r_{ik} c_{ikl} - d_{ikl} q_{ik} -
\frac{c_{ikl}}{R_i} \\
m_{ikl} &= \frac{\gamma_{ik}}{b_{ik}} - \frac{1}{q_{ik} R_i}.
\end{align}
Next, we can differentiate $M_{b_{ikl}}$ and $N_{b_{ikl}}$ and
integrate over $u$ to give the final result along with
\dref{eq:macro_capacity:I_b3} and \dref{eq:macro_capacity:I_b''1}.
Hence, from \dref{eq:macro_capacity:theta_1_functions1} and
\dref{eq:macro_capacity:omega1} we get
\dref{eq:macro_capacity:omega_f1_diff_set_theta1_zero}.
\begin{table*}[!t]
\small
\begin{align}
\frac{\partial}{\partial \theta_1}\left[ \frac{D_{ik}\left( u,
\theta_1 \right)}{f_1\left(\theta_1 \right)} \right]_{\theta_1=0}
\!=\! \frac{c_{ikl}}{n_{ikl}^2} &\ln \left[
\frac{\left(\frac{u}{P_{i2}\sigma^2} + R_i \right)\left(\lambda_{ik}
u + r_{ik} \right)}{\left(\frac{u}{P_{i1}\sigma^2} + R_i
\right)\left(\mu_{ik} u + r_{ik} \right)} \right] +
\frac{1}{n_{ikl}} \left[ \frac{P_{i2}\sigma^2}{\left(u +
\frac{\sigma^2}{P_{i1}} \right)}  - \frac{P_{i1}\sigma^2}{\left(u +
\frac{\sigma^2}{P_{i2}} \right)}\right]
\label{eq:macro_capacity:omega_f1_diff_set_theta1_zero}
\end{align}
\begin{align}
\tilde{M}_{b_{ikl}} &= \left. \frac{\partial M_{b_{ikl}}}{\partial
\theta_1} \right|_{\theta_1=0} = \int_0^\infty
\frac{\partial}{\partial \theta_1}\left[ \frac{D_{ik}\left( u,
\theta_1 \right)}{f_1\left(\theta_1 \right)} \right]_{\theta_1=0}
\frac{du}{\left( u + e_{ikl} \right)}
\label{eq:macro_capacity:I1_bikl2}
\end{align}
\begin{align}
\tilde{N}_{b_{ikl}} &= \left. \frac{\partial N_{b_{ikl}}}{\partial
\theta_1} \right|_{\theta_1=0} = \int_0^\infty
\frac{\partial}{\partial \theta_1}\left[ \frac{D_{ik}\left( u,
\theta_1 \right)}{f_1\left(\theta_1 \right)} \right]_{\theta_1=0}
\frac{du}{\left( u + m_{ikl} \right)} + \int_0^\infty \left[
\frac{D_{ik}\left( u, \theta_1 \right)}{f_1\left(\theta_1 \right)}
\right]_{\theta_1=0} \frac{1/q_{ik}}{\left( u + m_{ikl} \right)^2}
du \label{eq:macro_capacity:I2_bikl2}
\end{align}
\hrulefill
\begin{align}
\tilde{M}_{b_{ikl}}  &= \frac{c_{ikl}}{n_{ikl}^2} \left[
H_1\left(R_i, \varepsilon_{ikl}, \frac{1}{P_{i2}\sigma^2}\right) +
H_1\left(r_{ik} , \varepsilon_{ikl}, \lambda_{ik} \right) -
H_1\left( R_i, \varepsilon_{ikl}, \frac{1}{P_{i1}\sigma^2}\right) -
H_1\left(r_{ik} , \varepsilon_{ikl}, \mu_{ik}
\right) \right] \nonumber \\
&+ \frac{\varepsilon'_{ikl}}{n_{ikl}} \left[
e^{\frac{\sigma^2}{P_{i1}}} E_1 \left(\frac{\sigma^2}{P_{i1}}\right)
- e^{\varepsilon_{ikl}} E_1 \left( \varepsilon_{ikl} \right) \right]
- \frac{\varepsilon''_{ikl}}{n_{ikl}} \left[
e^{\frac{\sigma^2}{P_{i2}}} E_1 \left(\frac{\sigma^2}{P_{i2}}\right)
- e^{\varepsilon_{ikl}} E_1 \left( \varepsilon_{ikl} \right)
\right], \label{eq:macro_capacity:I1_bikl3}
\end{align}
\par
\hrulefill
\begin{align}
\tilde{N}_{b_{ikl}}  &= \frac{c_{ikl} }{n_{ikl}^2 q_{ik}} \left[
H_2\left(R_i, m_{ikl}, \frac{1}{P_{i2}\sigma^2}\right) + H_2\left(
r_{ik} , m_{ikl}, \lambda_{ik} \right) - H_2\left( R_i, m_{ikl},
\frac{1}{P_{i1}\sigma^2}\right) - H_2\left(r_{ik} , m_{ikl},
\mu_{ik}
\right) \right]  \nonumber \\
&+ \frac{c_{ikl}}{n_{ikl}^2} \left[ H_1\left(R_i, m_{ikl},
\frac{1}{P_{i2}\sigma^2}\right) + H_1\left(r_{ik} , m_{ikl},
\lambda_{ik} \right) - H_1\left( R_i, m_{ikl},
\frac{1}{P_{i1}\sigma^2}\right) - H_1\left(r_{ik} , m_{ikl},
\mu_{ik}
\right) \right] \nonumber \\
&+ \frac{m'_{ikl}}{n_{ikl}} \left[ e^{\frac{\sigma^2}{P_{i1}}} E_1
\left(\frac{\sigma^2}{P_{i1}}\right) - e^{m_{ikl}} E_1 \left(
m_{ikl} \right) \right] - \frac{m''_{ikl}}{n_{ikl}} \left[
e^{\frac{\sigma^2}{P_{i2}}} E_1 \left(\frac{\sigma^2}{P_{i2}}\right)
- e^{m_{ikl}} E_1 \left( m_{ikl} \right) \right],
\label{eq:macro_capacity:I2_bikl3}
\end{align}
\par
\par
\hrulefill
\end{table*}
Substituting \dref{eq:macro_capacity:omega_f1_diff_set_theta1_zero}
in \dref{eq:macro_capacity:I1_bikl1} and
\dref{eq:macro_capacity:I2_bikl1} we get
\dref{eq:macro_capacity:I1_bikl2} and
\dref{eq:macro_capacity:I2_bikl2}. \dref{eq:macro_capacity:I1_bikl2}
and \dref{eq:macro_capacity:I2_bikl2} can be solved in closed form
to give \dref{eq:macro_capacity:I1_bikl3} and
\dref{eq:macro_capacity:I2_bikl3}, where we have used the two
integrals defined as follows
\begin{align}
H_1 \left(a,b,c \right) &=  \int_0^\infty \frac{e^{-t} \ln \left(c t
+ a \right)}{t + b }dt \label{eq:macro_capacity:H1} \\
H_2 \left(a,b,c \right) &=  \int_0^\infty \frac{e^{-t} \ln \left(c t
+ a \right)}{\left( t + b \right)^2 }dt \label{eq:macro_capacity:H2}
\end{align}
\noindent and the constants are given by
\begin{align}
\varepsilon'_{ikl} = \frac{1}{\left( \varepsilon_{ikl} -
\frac{\sigma^2}{P_{i1}} \right)} , \quad  \varepsilon''_{ikl} &=
\frac{1}{\left( \varepsilon_{ikl} -
\frac{\sigma^2}{P_{i2}} \right)}, \nonumber \\
m'_{ikl} = \frac{1}{\left( m_{ikl} - \frac{\sigma^2}{P_{i1}}
\right)} , \quad m''_{ikl} &= \frac{1}{\left( m_{ikl} -
\frac{\sigma^2}{P_{i2}} \right)}. \nonumber
\end{align}
Both $H_1$ and $H_2$ can be solved in closed form as
\begin{equation}
H_1 \left(a,b,c \right) =  e^b \left[ E_1 \left(b\right) \ln c +
D_1\left(\frac{a}{c}-b, b \right) \right], \nonumber
\label{eq:macro_capacity:H1_answer}
\end{equation}
\begin{equation}
H_2 \left(a,b,c \right) =  \ln c \left[ \frac{1}{b} - e^b E_1
\left(b\right) \right] - 2 e^b D_1\left(\frac{a}{c}-b, b \right) +
\frac{1}{\left( \frac{a}{c}-b \right)} \left[e^b E_1 \left(b\right)
- e^{\frac{a}{c}} E_1 \left(\frac{a}{c}\right) \right], \nonumber
\label{eq:macro_capacity:H1_answer}
\end{equation}
\noindent where $D_1(a,b)$ is defined by
\begin{equation}
D_1(a,b) = \int_b^\infty \frac{e^{-t} \ln \left(t + a \right)}{t}dt,
\hspace{5mm} \mbox{for} \hspace{5mm} b\neq 0. \nonumber
\label{eq:macro_capacity:D1}
\end{equation}
\section{Calculation of $E \left\{ \left|\sigma^2 \bm{I} + \tilde{\bm{H}}_k^H \tilde{\bm{H}}_k\right|
\right \}$}\label{app:macro_capacity:numerator_expectation}
Let $\lambda_1, \lambda_2, \dots , \lambda_{k-1}$ be the ordered
eigenvalues of $\tilde{\bm{H}}_k^H \tilde{\bm{H}}_k$. Since $n_R
\geq \left(k-1 \right)$, all eigenvalues are non zero. Then,
\begin{align}
E \left\{ \left|\sigma^2 \bm{I} + \tilde{\bm{H}}_k^H
\tilde{\bm{H}}_k\right| \right \} &= E \left\{ \prod_{i=1}^{k-1}
\left( \sigma^2 + \lambda_i \right) \right
\}  \nonumber \\
\!&=\! E \!\left\{ \sum_{i=0}^{k-1} \Tr_{i} \!\left(
\tilde{\bm{H}}_k^H \tilde{\bm{H}}_k \right)\left(\sigma^2
\right)^{k-i-1} \!\! \right \},
\label{eq:macro_capacity:mmse_expectation:1}
\end{align}
where \dref{eq:macro_capacity:mmse_expectation:1} is from
\dref{identity:macro_capacity:esf} and Lemma
\ref{lemma:macro_capacity:esf_trace_identity}. Therefore, the
building block of this expectation is $E \left\{\Tr_{i} \left(
\tilde{\bm{H}}_k^H \tilde{\bm{H}}_k \right) \right \}$. From Lemma
\ref{lemma:macro_capacity:esf_trace_identity}
\begin{align}
\Tr_{i} \left( \tilde{\bm{H}}_k^H \tilde{\bm{H}}_k \right) &=
\sum_{\sigma} \left| \left( \tilde{\bm{H}}_k^H \tilde{\bm{H}}_k
\right)_{\sigma_{i,k-1}} \right|.
\label{eq:macro_capacity:mmse_expectation:2}
\end{align}
Therefore, from Lemma
\ref{lemma:macro_capacity:expected_square_determinant},
\begin{equation}
E \left\{ \Tr_{i} \left( \tilde{\bm{H}}_k^H \tilde{\bm{H}}_k \right)
\right \} = \sum_{\sigma} \Perm \left( \left(\bm{Q}_k
\right)^{\sigma_{i,k-1}} \right), \nonumber
\label{eq:macro_capacity:mmse_expectation:3}
\end{equation}
\noindent where the $n_R \times \left( k-1 \right)$ matrix,
$\bm{Q}_k$, is given by
\begin{equation}
E \left\{ \tilde{\bm{H}}_k \circ \tilde{\bm{H}}_k \right\} =
\bm{Q}_k. \label{eq:macro_capacity:Qk1}
\end{equation}
Note that summation in \dref{eq:macro_capacity:mmse_expectation:3}
has $\binom {k-1}{i}$ terms. Then, the final expression becomes
\begin{align}
E \!\left\{ \left|\sigma^2 \bm{I} + \tilde{\bm{H}}_k^H
\tilde{\bm{H}}_k\right| \right \} \!&=\!\! \sum_{i=0}^{k-1}\!
\sum_{\sigma} \Perm \left( \left(\bm{Q}_k \right)^{\sigma_{i,k-1}}
\right)\! \left(\sigma^2 \right)^{k-i-1}.
\label{eq:macro_capacity:mmse_expectation:final_expression}
\end{align}
\section{Calculation of $\left| \bm{\Sigma}_k \right| E
\left\{ \left| \sigma^2 \bm{I} + \tilde{\bm{H}}_k^H \bm{\Sigma}_k^{-1} \tilde{\bm{H}}_k\right| \right\}$ } \label{app:macro_capacity:denom_expectation}
A simple extension of \dref{eq:macro_capacity:Cj_numerator:expectation}
allows the expectation in the denominator of \dref{eq:macro_capacity:general:ICj2}
to be calculated as
\begin{align}
E \left\{ \left|\sigma^2 \bm{I} + \tilde{\bm{H}}_k^H
\bm{\Sigma}_k^{-1} \tilde{\bm{H}}_k\right| \right \} \!\! &=\!\!
\sum_{i=0}^{k-1} \psi_{ki} \left(t\right) \left(\sigma^2
\right)^{k-i-1},
\label{eq:app:macro_capacity:ICj2_denominator:expectation}
\end{align}
\noindent where
\begin{align}
\psi_{ki} \left(t\right) &= \sum_{\sigma} \Perm \left(
\left(\bm{\Sigma}_k^{-1} \bm{Q}_k \right)^{\sigma_{i,k-1}} \right),
\label{eq:app:macro_capacity:ICj2_denominator:expectation:psi1}
\end{align}
and from \dref{eq:macro_capacity:axiom1}
\begin{equation}
\psi_{k0} \left(t\right) = 1. \nonumber
\label{eq:app:macro_capacity:ICj2_denominator:expectation:psi_zero}
\end{equation}
The term in
\dref{eq:app:macro_capacity:ICj2_denominator:expectation:psi1} can
be simplified using \dref{eq:macro_capacity:axiom2} to obtain
\begin{align}
\psi_{ki} \left(t\right) &=  \sum_\sigma \frac{\Perm \left( \left(
\bm{Q}_k
\right)_{\sigma_{i,n_R}}^{\{k-1\}}\right)}{\left|\left(\bm{\Sigma}_k
\right)_{\sigma_{i,n_R}}\right|}.
\label{eq:app:macro_capacity:ICj2_denominator:expectation:psi2}
\end{align}
Then,
\begin{align}
\left|\bm{\Sigma}_k \right| E \left\{ \left|\sigma^2 \bm{I} +
\tilde{\bm{H}}_k^H \bm{\Sigma}_k^{-1} \tilde{\bm{H}}_k \right|
\right \} \!\! &=\!\! \sum_{i=0}^{k-1} \xi_{ki} \left(t\right)
\left(\sigma^2 \right)^{k-i-1},
\label{eq:app:macro_capacity:ICj2_denominator:expectation1}
\end{align}
\noindent where $\xi_{ki} \left(t\right) =\left|\bm{\Sigma}_k\right|
\psi_{ki} \left(t\right)$. From
\dref{eq:app:macro_capacity:ICj2_denominator:expectation:psi2}, we
obtain
\begin{align}
\xi_{ki} \left(t\right) &= \sum_\sigma \left|\left( \bm{\Sigma}_k
\right)_{\bar{\sigma}_{n_R-i,n_R}} \right| \Perm \left( \left(
\bm{Q}_k \right)_{\sigma_{i,n_R}}^{\{k-1\}}\right),
\label{eq:app:macro_capacity:ICj2_denominator:expectation:xi1}
\end{align}
\noindent where $\bar{\sigma}_{n_R-i,n_R}$ is the compliment of
$\sigma_{i,n_R}$. Therefore, it is apparent that $\xi_{ki}
\left(t\right)$ is a polynomial of degree $n_R-i$. Clearly
$\left|\bm{\Sigma}_k\right| E \left\{ \left|\sigma^2 \bm{I} +
\tilde{\bm{H}}_k^H \bm{\Sigma}_k^{-1} \tilde{\bm{H}}_k\right| \right
\}$ is a polynomial of degree $n_R$, since
$\xi_{k0}\left(t\right)=\left|\bm{\Sigma}_k\right|$ is the highest
degree polynomial term in $t$ in
\dref{eq:app:macro_capacity:ICj2_denominator:expectation1}. Then,
\begin{equation}
\left|\left(\bm{\Sigma}_k \right)_{\bar{\sigma}_{n_R-i,n_R}} \right|
= \sum_{l=0}^{n_R-i} \left(\frac{t}{\sigma^2}\right)^l \Tr_l \left(
\left(\bm{P}_k\right)_{\bar{\sigma}_{n_R-i,n_R}} \right).
\label{eq:app:macro_capacity:ICj2_denominator:expectation:anonymous1}
\end{equation}
Hence, applying
\dref{eq:app:macro_capacity:ICj2_denominator:expectation:anonymous1}
in \dref{eq:app:macro_capacity:ICj2_denominator:expectation:xi1},
\begin{equation}
\xi_{ki} \left(t\right) = \sum_\sigma \sum_{l=0}^{n_R-i}
\left(\frac{t}{\sigma^2}\right)^l \Tr_l \left(
\left(\bm{P}_k\right)_{\bar{\sigma}_{n_R-i,n_R}} \right)
 \Perm \left( \left( \bm{Q}_k
\right)_{\sigma_{i,n_R}}^{\{k-1\}}\right), \nonumber
\label{eq:app:macro_capacity:ICj2_denominator:expectation:xi2}
\end{equation}
and $\xi_{ki} \left(t\right)$ becomes
\begin{align}
\xi_{ki} \left(t\right) &= \sum_{l=0}^{n_R-i}
\left(\frac{t}{\sigma^2}\right)^l \hat{\varphi}_{kli} \label{eq:app:macro_capacity:ICj2_denominator:expectation:xi30}\\
&= \sum_{l=0}^{n_R} \left(\frac{t}{\sigma^2}\right)^l
\hat{\varphi}_{kli},
\label{eq:app:macro_capacity:ICj2_denominator:expectation:xi3}
\end{align}
\noindent where
\begin{equation}
\hat{\varphi}_{kli} = \left. \sum_\sigma \Tr_l \left(
\left(\bm{P}_k\right)_{\bar{\sigma}_{n_R-i,n_R}} \right) \Perm
\left( \left( \bm{Q}_k \right)_{\sigma_{i,n_R}}^{\{k-1\}}\right)
\right., \nonumber
\label{eq:app:macro_capacity:ICj2_denominator:expectation:acute_varphi1}
\end{equation}
and from \dref{eq:macro_capacity:axiom1}, $\hat{\varphi}_{kl0}$
simplifies to give
\begin{equation}
\hat{\varphi}_{kl0} =  \Tr_l \left(\bm{P}_k \right). \nonumber
\label{eq:app:macro_capacity:ICj2_denominator:expectation:acute_varphi_i_zero}
\end{equation}
Equation \dref{eq:app:macro_capacity:ICj2_denominator:expectation:xi3} follows from
\dref{eq:app:macro_capacity:ICj2_denominator:expectation:xi30} due to
the fact that
\begin{equation}
\Tr_l \left( \left(\bm{P}_k\right)_{\bar{\sigma}_{n_R-i,n_R}}
\right) =0 \quad \mbox{for}  \quad l > n_R-i. \nonumber
\end{equation}

Therefore, \dref{eq:app:macro_capacity:ICj2_denominator:expectation} can
be written as
\begin{equation}
\left|\bm{\Sigma}_k\right| E \left\{ \left|\sigma^2 \bm{I} +
\tilde{\bm{H}}_k^H \bm{\Sigma}_k^{-1} \tilde{\bm{H}}_k\right| \right
\} = \sum_{i=0}^{k-1}\sum_{l=0}^{n_R} \left. t \right.^l
\hat{\varphi}_{kli} \left(\sigma^2 \right)^{k-l-i-1}, \nonumber
\end{equation}
\noindent which in turn can be given as
\begin{equation}
\left|\bm{\Sigma}_k\right| E \left\{ \left|\sigma^2 \bm{I} +
\tilde{\bm{H}}_k^H \bm{\Sigma}_k^{-1} \tilde{\bm{H}}_k\right| \right
\} = \sum_{l=0}^{n_R} \left. t \right.^{l} \varphi_{kl}, \nonumber
\label{eq:app:macro_capacity:ICj2_denominator:expectation11}
\end{equation}
\noindent where
\begin{align}
\varphi_{kl} &=  \sum_{i=0}^{k-1} \hat{\varphi}_{kli} \left(\sigma^2
\right)^{k-l-i-1}.
\label{eq:app:macro_capacity:ICj2_denominator:expectation:varphi1}
\end{align}
\section{Extended Laplace Type Approximation}\label{app:macro_capacity:extended_laplace}
Note the well-known fact that, $\sigma^2 \bm{I}= E\left
\{\bm{A}^H\bm{A} \right \}$, for an iid complex Gaussian matrix
ensemble, $\bm{A}$, of $\mathcal{CN}\left(0,\frac{\sigma^2}{\kappa}
\right)$ random variables, where $\bm{A}$ is a $\kappa \times k-1$
matrix as in \cite{Gao98}. This result can be rewritten in the limit
to give $\sigma^2 \bm{I}= \substack { \lim \\ \kappa \rightarrow
\infty } \left \{ \bm{A}^H\bm{A} \right \}$. Using this in
\dref{eq:macro_capacity:general:ICj1} gives
\begin{align}
%
\tilde{I}_{k} \left( t \right) &=  \frac{1}{\left| \bm{\Sigma}_k
\right|} \lim_{\kappa \rightarrow \infty} E \left\{ \frac{\left|
\bm{A}^H\bm{A} + \tilde{\bm{H}}_k^H \tilde{\bm{H}}_k \right|}{\left|
\bm{A}^H\bm{A} + \tilde{\bm{H}}_k^H \bm{\Sigma}_k^{-1}
\tilde{\bm{H}}_k \right|} \right \}
\label{eq:macro_capacity:extended_laplace2},\\
&= \frac{1}{\left| \bm{\Sigma}_k \right|} \lim_{\kappa \rightarrow
\infty} E \left\{ \frac{\left| \left( \bm{A}^H, \tilde{\bm{H}}_k^H
\right) \left( \substack{\bm{A}
\\ \tilde{\bm{H}}_k}\right) \right|}{\left| \left( \bm{A}^H,
\tilde{\bm{H}}_k^H \bm{\Sigma}_k^{-\frac{1}{2}} \right) \left(
\substack{\bm{A} \\ \bm{\Sigma}_k^{-\frac{1}{2}}\tilde{\bm{H}}_k}
\right) \right|} \right \} \label{eq:macro_capacity:extended_laplace3},\\
&= \frac{1}{\left| \bm{\Sigma}_k \right|} \lim_{\kappa \rightarrow
\infty} E \left\{ \frac{\left| \bm{B}_k^H \bm{B}_k \right|}{\left|
\bm{B}_k^H \bar{\bm{\Sigma}}_k \bm{B}_k \right|} \right \}
\label{eq:macro_capacity:extended_laplace4},
\end{align}
\noindent where $\bar{\bm{\Sigma}}_k =\mbox{diag} \left(\bm{I},
\bm{\bm{\Sigma}_k}^{-\frac{1}{2}} \right)$ and $\bm{B}_k=\left(
\substack{\bm{A}
\\ \tilde{\bm{H}}_k}\right)$. Using the well-known fact
\begin{align}
\left| \bm{B}_k^H \bm{B}_k \right|=\prod_{i=1}^{k-1} \bm{b}_{ki}^H
\left(\bm{I} - \tilde{\bm{B}}_{ki} \left(\tilde{\bm{B}}_{ki}^H
\tilde{\bm{B}}_{ki} \right)^{-1} \tilde{\bm{B}}_{ki}^H
\right)\bm{b}_{ki}, \label{eq:macro_capacity:extended_laplace45}
\end{align}
from standard linear algebra, where $\bm{b}_{ki}$ is the $i^{th}$
column of $\bm{B}_k$, we can approximate
\dref{eq:macro_capacity:extended_laplace4} by
\begin{align}
\tilde{I}_{k} \left( t \right) &\simeq \frac{1}{\left| \bm{\Sigma}_k
\right|} \prod_{i=1}^{k-1} E \left\{ \frac{\bm{b}_{ki}^H
\left(\bm{I} - \tilde{\bm{B}}_{ki} \left(\tilde{\bm{B}}_{ki}^H
\tilde{\bm{B}}_{ki} \right)^{-1} \tilde{\bm{B}}_{ki}^H
\right)\bm{b}_{ki}}{\bm{b}_{ki}^H \left(\bar{\bm{\Sigma}}_k -
\bar{\bm{\Sigma}}_k \tilde{\bm{B}}_{ki} \left(\tilde{\bm{B}}_{ki}^H
\bar{\bm{\Sigma}}_k \tilde{\bm{B}}_{ki} \right)^{-1}
\tilde{\bm{B}}_{ki}^H \bar{\bm{\Sigma}}_k \right)\bm{b}_{ki}}
\right\} \label{eq:macro_capacity:extended_laplace5},
\end{align}
\noindent where $\bm{b}_{ki}$ and $\bm{B}_k$ correspond to a large
but finite value of $\kappa$. Approximation
\dref{eq:macro_capacity:extended_laplace5} assumes that the terms in
the product in \dref{eq:macro_capacity:extended_laplace45} are
independent. This is only true when $\bm{b}_{ki}$ contains iid
elements. However, in the macrodiversity case, all the elements of
$\bm{b}_{ki}$ are not iid. Nevertheless, part of $\bm{b}_{ki}$ (the
contribution from $\bm{A}$) is iid.  This motivates the
approximation in \dref{eq:macro_capacity:extended_laplace5}. Next,
we apply the standard Laplace type approximation \cite{Lib94} in
\dref{eq:macro_capacity:extended_laplace5} to give
\begin{align}
\tilde{I}_{k} \left( t \right)&\simeq \frac{1}{\left| \bm{\Sigma}_k
\right|} \prod_{i=1}^{k-1} \frac{ E \left\{ \bm{b}_{ki}^H
\left(\bm{I} - \tilde{\bm{B}}_{ki} \left(\tilde{\bm{B}}_{ki}^H
\tilde{\bm{B}}_{ki} \right)^{-1} \tilde{\bm{B}}_{ki}^H
\right)\bm{b}_{ki}\right \}}{ E \left\{ \bm{b}_{ki}^H
\left(\bar{\bm{\Sigma}}_k - \bar{\bm{\Sigma}}_k \tilde{\bm{B}}_{ki}
\left(\tilde{\bm{B}}_{ki}^H \bar{\bm{\Sigma}}_k \tilde{\bm{B}}_{ki}
\right)^{-1} \tilde{\bm{B}}_{ki}^H \bar{\bm{\Sigma}}_k
\right)\bm{b}_{ki} \right\}}
\label{eq:macro_capacity:extended_laplace6},\\
&\simeq \frac{1}{\left| \bm{\Sigma}_k \right|} \frac{ E \left\{
\prod_{i=1}^{k-1} \bm{b}_{ki}^H \left(\bm{I} - \tilde{\bm{B}}_{ki}
\left(\tilde{\bm{B}}_{ki}^H \tilde{\bm{B}}_{ki} \right)^{-1}
\tilde{\bm{B}}_{ki}^H \right)\bm{b}_{ki}\right \}}{ E \left\{
\prod_{i=1}^{k-1} \bm{b}_{ki}^H \left(\bar{\bm{\Sigma}}_k -
\bar{\bm{\Sigma}}_k \tilde{\bm{B}}_{ki} \left(\tilde{\bm{B}}_{ki}^H
\bar{\bm{\Sigma}}_k \tilde{\bm{B}}_{ki} \right)^{-1}
\tilde{\bm{B}}_{ki}^H \bar{\bm{\Sigma}}_k \right)\bm{b}_{ki}
\right\}}\label{eq:macro_capacity:extended_laplace7} ,\\
&= \frac{1}{\left| \bm{\Sigma}_k \right|} \frac{ E \left\{ \left|
\bm{B}_k^H \bm{B}_k \right| \right\} }{E \left\{ \left| \bm{B}_k^H
\bar{\bm{\Sigma}}_k \bm{B}_k \right|
\right\}}\label{eq:macro_capacity:extended_laplace8}.
\end{align}
Hence, a combination of approximate independence, the Laplace
approximation for quadratic forms and the limiting version in
\dref{eq:macro_capacity:extended_laplace2} gives rise to the
approximation used in Sec. \ref{sec:system_analysis_general_user}.
The accuracy of this approach is numerically established in the
simulation results in Sec.
\ref{sec:macro_capacity:numerical_analysis}.

\begin{IEEEbiography}{Dushyantha Basnayaka}
(S'11) was born in 1982 in Colombo, Sri Lanka. He received the
B.Sc.Eng degree with 1\textsuperscript{st} class honors from the
University of Peradeniya, Sri Lanka, in Jan 2006. He is currently
working towards for his PhD degree in Electrical and Computer
Engineering at the University of Canterbury, Christchurch, New
Zealand.\\
He was an instructor in the Department of Electrical and Electronics
Engineering at the University of Peradeniya from Jan 2006 to May
2006. He was a system engineer at MillenniumIT (a member company of
London Stock Exchange group) from May 2006 to Jun 2009. Since Jun.
2009 he is with the communication research group at the University
of Canterbury, New Zealand.\\
D. A. Basnayaka is a recipient of University of Canterbury
International Doctoral Scholarship for his doctoral studies at UC.
His current research interest includes all the areas of digital
communication, specially macrodiversity wireless systems. He holds
one pending US patent as a result of his doctoral studies at UC.
\end{IEEEbiography}
\begin{IEEEbiography}{Peter Smith}
(M'93-SM'01) received the B.Sc degree in Mathematics and the Ph.D
degree in Statistics from the University of London, London, U.K., in
1983 and 1988, respectively. From 1983 to 1986 he was with the
Telecommunications Laboratories at GEC Hirst Research Centre. From
1988 to 2001 he was a lecturer in statistics at Victoria University,
Wellington, New Zealand. Since 2001 he has been a Senior Lecturer
and Associate Professor in Electrical and Computer Engineering at
the University of Canterbury in New Zealand. His research interests
include the statistical aspects of design, modeling and analysis for
communication systems, especially antenna arrays, MIMO, cognitive
radio and relays.
\end{IEEEbiography}
\begin{IEEEbiography}{Philippa Martin}
(S'95-M'01-SM'06) received the B.E. (Hons. 1) and Ph.D. degrees in
electrical and electronic engineering from the University of
Canterbury, Christchurch, New Zealand, in 1997 and 2001,
respectively.  From 2001 to 2004, she was a postdoctoral fellow,
funded in part by the New Zealand Foundation for Research, Science
and Technology (FRST), in the Department of Electrical and Computer
Engineering at the University of Canterbury.  In 2002, she spent 5
months as a visiting researcher in the Department of Electrical
Engineering at the University of Hawaii at Manoa, Honolulu, Hawaii,
USA.  Since 2004 she has been working at the University of
Canterbury as a lecturer and then as a senior lecturer (since 2007).
In 2007, she was awarded the University of Canterbury, College of
Engineering young researcher award.  She served as an Editor for the
IEEE Transactions on Wireless Communications 2005-2008 and regularly
serves on technical program committees for IEEE conferences.  Her
current research interests include multilevel coding, error
correction coding, iterative decoding and equalization, space-time
coding and detection, cognitive radio and cooperative communications
in particular for wireless communications
\end{IEEEbiography}

\end{document}